\newtheorem{lemma}{Lemma}
\newtheorem{theorem}{Theorem}
\newtheorem{corollary}{Corollary}
\newcommand{\qed}{\hfill\ensuremath{\Box}\medskip\\\noindent}
\newenvironment{proof}{\noindent\emph{Proof. }}{}
\newcommand{\nca}{\ensuremath{\mathrm{nca}}}
\newcommand{\Taux}{\ensuremath{\widetilde{T}}}
\newcommand{\D}{\ensuremath{D}}
\newcommand{\TT}{\ensuremath{\mathcal{T}}}
\newcommand{\TD}{\ensuremath{\mathcal{TD}}}
\newcommand{\nt}{\ensuremath{n_T}}
\newcommand{\nd}{\ensuremath{n_{\D}}}
\newcommand{\ntd}{\ensuremath{n_\TD}}
\newcommand{\ntt}{\ensuremath{n_\TT}}
\newcommand{\Access}{\ensuremath\mathsf{Access}}
\newcommand{\Decompress}{\ensuremath\mathsf{Decompress}}
\newcommand{\Parent}{\ensuremath\mathsf{Parent}}
\newcommand{\Depth}{\ensuremath\mathsf{Depth}}
\newcommand{\Height}{\ensuremath\mathsf{Height}}
\newcommand{\Size}{\ensuremath\mathsf{Size}}
\newcommand{\FirstChild}{\ensuremath\mathsf{Firstchild}}
\newcommand{\NextSibling}{\ensuremath\mathsf{NextSibling}}
\newcommand{\LevelAncestor}{\ensuremath\mathsf{LevelAncestor}}
\newcommand{\NCA}{\ensuremath\mathsf{NCA}}
\newcommand{\decomp}{\ensuremath\mathsf{Decompress}}
\newcommand{\FindRepr}{\ensuremath\mathsf{FindRepresentatives}}
\begin{document}
\title{Tree Compression with Top Trees\thanks{An extended abstract of this paper appeared at the 40th International Colloquium on Automata, Languages and Programming.}}
\author{Philip Bille \thanks{Partially supported by the Danish Agency for Science, Technology and Innovation.} \\  \href{mailto:phbi@dtu.dk}{phbi@dtu.dk} \and Inge Li G{\o}rtz$^{\dag}$\\
 \href{mailto:inge@dtu.dk}{inge@dtu.dk}     \and Gad M. Landau\thanks{Partially supported by the National Science Foundation
Award 0904246, Israel Science Foundation grant 347/09,
Yahoo, Grant No. 2008217 from the United States-Israel
Binational Science Foundation (BSF) and DFG.} \\  \href{mailto:landau@cs.haifa.ac.il}{landau@cs.haifa.ac.il}   \and Oren Weimann $^{\dag}$ \thanks{Partially supported by  the Israel Science Foundation grant 794/13.} \\  \href{mailto:oren@cs.haifa.ac.il}{oren@cs.haifa.ac.il} }
\date{}

\maketitle

\begin{abstract}

We introduce a new compression scheme for labeled trees based on top trees~\cite{TopTrees}. 
Our  compression scheme is the first to simultaneously take advantage of internal repeats in the tree (as opposed to the classical DAG compression that only exploits rooted subtree repeats) while also supporting fast navigational queries directly on the compressed representation. We show that the new compression scheme achieves close to optimal worst-case compression, can compress exponentially better than DAG compression, is never much worse than DAG compression, and supports navigational queries in logarithmic time.
\end{abstract}

\section{Introduction} 
A labeled tree $T$ is a rooted, ordered tree, where each node has a label from an alphabet $\Sigma$. Labeled trees appear in computer science as tries, dictionaries, parse trees, suffix trees, XML trees, etc. In this paper, we study compression schemes for labeled trees that take advantage of \emph{repeated substructures} and support navigational queries, such as returning the label of a node $v$, the parent of $v$, the depth of $v$, the size $v$'s subtrees, etc., directly on the compressed representation. We consider the following two basic types of repeated substructures (see Figure~\ref{fig:subtree}). The first type is used in DAG compression~\cite{BKG03,FKG03} and the second in tree grammars~\cite{Busatto04grammarBasedtree,BLM08,LohreyEtAl,LM06,MB04}.

\begin{description}
\item[Subtree repeat.] A \emph{rooted subtree} is a subgraph of $T$ consisting of a node and all its descendants. A \emph{subtree repeat} is an identical (both in structure and in labels) occurrence of a rooted subtree in $T$. 

\item[Tree pattern repeat.] A  \emph{tree pattern} is any connected subgraph of $T$. A  \emph{tree pattern repeat} is an identical (both in structure and in labels) occurrence of a tree pattern in $T$.
\end{description}

In this paper, we introduce a simple new compression scheme, called \emph{top tree compression}, that exploits tree pattern repeats. Compared to the existing techniques our compression scheme has the following advantages: Let $T$ be a tree of size $n$ with nodes labeled from an alphabet of size $\sigma$. We support navigational queries in $O(\log n)$ time (a similar result is not known for tree grammars), the compression ratio is in the worst case at least $\log_\sigma^{0.19} n$ (no such result is known for either DAG compression or tree grammars), our scheme can compress exponentially better than DAG compression, and the compression ratio is never worse than DAG compression by more than a $\log n$ factor.

\subsection{Previous Work}
The previous work on tree compression can be described by three major approaches: using subtree repeats, using tree pattern repeats, and using succinct data structures.  Below we briefly discuss these approaches and the existing  tree compression schemes. 
Extensive practical work has recently been done on all these tree compression schemes (see e.g., the recent survey of Sakr~\cite{sakr2009xml}).

\paragraph{DAG compression.} Using subtree repeats, a node in the tree $T$ that has a child with subtree $T'$ can instead point to any other occurrence of $T'$. This way, it is possible to represent $T$ as a Directed Acyclic Graph (DAG). Over all possible DAGs that can represent $T$, the smallest one is unique and can be computed in $O(n)$ time~\cite{DST1980}. Its size can be exponentially smaller than $n$. DAG representation of trees are broadly used for identifying and sharing common subexpressions, e.g., in programming languages~\cite{steven1997advanced}  and binary decision diagrams~\cite{meinel1998algorithms}. Compression based on DAGs has also been studied more recently in~\cite{BKG03,FKG03, lohrey2013xml} and a Lempel-Ziv analog of subtree repeats was suggested in~\cite{LZGonzalo}. It is possible to support navigational queries~\cite{SODA2011} and path queries~\cite{BKG03} directly on the DAG representation in logarithmic time. 
The problem with subtree repeats is that we can miss many internal repeats. Consider for example the case where $T$ is a single path of $n$ nodes with the same label. Even though $T$ is highly compressible (we can represent it by just  storing the label and the path length) it does not contain a single subtree repeat and its minimal DAG is of size $n$. 
\begin{figure}[t]
   \centering
   \includegraphics[scale=1]{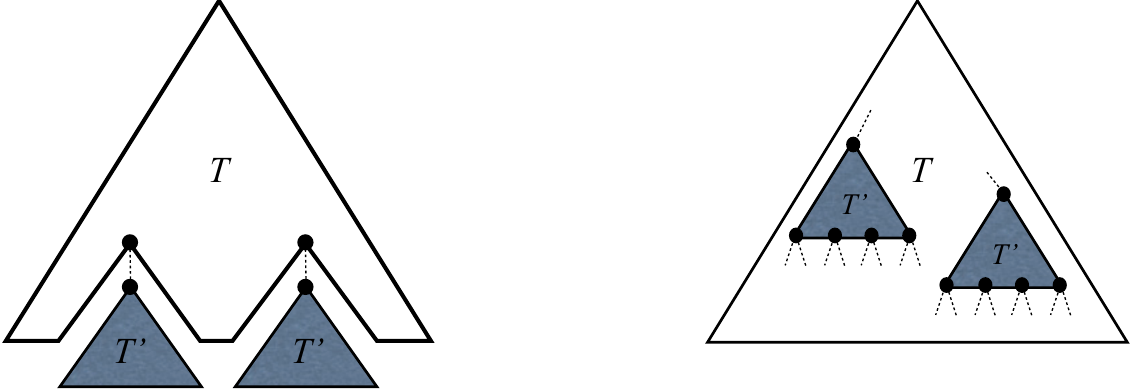}
   \caption{A tree $T$ with a {\em subtree} repeat $T'$ (left), and a  {\em tree pattern} repeat $T'$ (right).}
   \label{fig:subtree}
\end{figure}
\paragraph{Tree grammars.} Alternatively, \emph{tree grammars} are capable of exploiting tree pattern repeats.  Tree grammars generalize grammars from deriving strings to deriving trees and were studied in~\cite{Busatto04grammarBasedtree, BLM08, LohreyEtAl, LM06, MB04}. Compared to DAG compression, a tree grammar can be  exponentially smaller than the minimal DAG~\cite{LM06}. Unfortunately, computing a minimal tree grammar is NP-Hard~\cite{CLL+05}, and all known tree grammar based compression schemes can only support navigational queries in time proportional to the height of the grammar which can be $\Omega(n)$.

\paragraph{Succinct  data structures.} A different approach to tree compression is \emph{succinct data structures} that compactly encode trees. Jacobson~\cite{Jacobson} was the first to observe that the naive pointer-based tree representation using $\Theta(n \log n)$ bits is wasteful. He showed that {\em unlabeled} trees can be represented using $2n+o(n)$ bits and support various queries by inspection of $\Theta(\lg n)$ bits in the bit probe model. This space bound is asymptotically optimal with the information-theoretic lower bound. Munro and Raman~\cite{MunroRaman01} showed how to achieve the same bound in the RAM model while using only constant time for queries. Such representations are called \emph{succinct data structures}, and have  been generalized to include a richer set of queries such as subtree-size queries~\cite{MunroRaman01,BenoitEtAl05} and level-ancestor queries~\cite{Geary04succinctordinal}.
For {\em labeled} trees, Ferragina et al.~\cite{FerraginaFOCS2005} gave a representation using $2n \log \sigma + O(n)$ bits that
supports basic navigational operations, such as find the parent of node $v$, the $i$'th child of $v$, and any child of $v$ with label $\alpha$. Ferragina et al. also introduced the notion of $k$'th order {\em tree entropy} $H_k$ in a restricted model. In this model, used by popular  XML compressors~\cite{XML2,XML1}, the label of a node is a function of the labels of all its ancestors. For such a tree $T$,  Ferragina et al. gave a representation requiring at most $n  H_k(T ) +  2.01n + o(n)$ bits. Note that  the above space bounds do not guarantee a  compact representation when the input contains many subtree repeats or tree pattern repeats. In particular, the total space is never $o(n)$ bits.

\subsection{Our Results.}
We propose a new compression scheme for labeled trees, which we call \emph{top tree compression}. To the best of our knowledge, this is the first compression scheme for trees that (i) takes advantage of tree pattern repeats (like tree grammars) but (ii) simultaneously supports navigational queries on the compressed representation in logarithmic time (like DAG compression). 
In the worst case, we show  that (iii) the compression ratio of top tree compression is always at least $\log_\sigma^{0.19} n$ (compared to the information-theoretic bound of $\log_\sigma n$). This is in contrast to both tree grammars and DAG compression that have not yet been proven to have worst-case compression performance comparable to the information-theoretic bound. Finally, we compare the performance of top tree compression to DAG compression. We show that top tree compression (iv) can compress exponentially better than DAG compression, and (v) is never  worse than DAG compression by more than a  $\log  n$ factor. 
%
%With these features, top tree compression significantly improves the state-of-the-art for tree compression. Specifically, it is the first scheme to simultaneously achieve (i) and (ii) and the first scheme based on either subtree repeats or tree pattern repeats with provable good compression performance compared to worst-case (iii) or the DAG (iv). 
%

The key idea in top tree compression is to transform the input tree $T$ into another tree $\TT$ such that tree pattern repeats in $T$ become subtree repeats in $\TT$. The transformation is based on  top trees~\cite{AHLT1997,AHT2000,TopTrees} -- a data structure originally designed for dynamic (uncompressed) trees. After the transformation, we compress the new tree $\TT$ using the classical DAG compression resulting in the \emph{top DAG} $\TD$. The top DAG $\TD$ forms the basis for our compression scheme. We obtain our bounds on compression (iii), (iv), and (v) by analyzing the size of $\TD$ , and we obtain efficient navigational queries (ii) by augmenting $\TD$ with additional data structures.

To state our bounds, let $n_G$ denote the total size (vertices plus edges) of the graph $G$. 
We assume a standard word RAM model of computation with logarithmic word size. All space complexities refer to the number of words used by the data structure. 
We first show the following worst-case compression bound achieved by the top DAG.

\begin{theorem}\label{thm:worstcasebounds}
Let $T$ be any ordered tree with nodes labeled from an alphabet of size $\sigma$ and let $\TD$ be the corresponding top DAG. Then, $n_{\TD} = O(\nt/\log_\sigma^{0.19}\!\nt)$.
\end{theorem}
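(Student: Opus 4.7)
The plan is to bound $n_{\TD}$ by the number of distinct rooted subtrees of $\TT$, since the top DAG is obtained by classical (minimal) DAG compression of $\TT$. First I would invoke the two structural properties of the top tree~\cite{TopTrees}: $\TT$ is an ordered binary tree of height $O(\log \nt)$ whose leaves are in bijection with the edges of $T$, and each internal node represents a \emph{cluster} of $T$ (a connected subgraph with at most two boundary vertices), whose \emph{weight} equals the number of $\TT$-leaves below it. In particular, along any root-to-leaf path the cluster weight decreases by a multiplicative constant per step, so only $O(\nt/w)$ nodes of $\TT$ have weight exceeding $w$.

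Next I would split the distinct subtrees of $\TT$ according to a threshold $w^\ast$ on cluster weight. Let $N(w)$ count the distinct cluster shapes of weight exactly $w$, labels included. Since a cluster is a rooted ordered tree on $w$ edges with at most two marked boundary vertices and labels drawn from $\Sigma$, a Catalan-type enumeration yields $N(w) \leq c_1^w \sigma^{w+1}$ for a small constant $c_1$. Summing up to $w^\ast$, the total number of distinct ``small'' subtrees is $O((c_1\sigma)^{w^\ast})$, while by the weight-decay observation the number of nodes of $\TT$ corresponding to subtrees of weight more than $w^\ast$ is at most $O(\nt/w^\ast)$. Hence
\[
n_{\TD} = O\bigl((c_1\sigma)^{w^\ast}\bigr) + O(\nt/w^\ast).
\]

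Optimizing over $w^\ast$ is the technical heart of the argument. Choosing $w^\ast = \beta \log_\sigma \nt$ for a suitable constant $\beta$ extracted from the top tree balance guarantee makes the first term sublinear in $\nt$; the precise exponent $0.19$ then comes from a more careful accounting which exploits the fact that top tree merges produce only clusters that satisfy specific balance conditions on their two subclusters, thereby reducing the effective base of $N(w)$ and allowing $w^\ast$ to be pushed up to $\Theta(\log_\sigma^{0.19}\!\nt)$ before the two terms cross.

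The main obstacle will be the interplay between two competing constants: I want the base $c_1$ in $N(w)$ as small as possible so that $w^\ast$ can be enlarged, but the structural restrictions that shrink $c_1$ simultaneously weaken the multiplicative decrease of weight along root-to-leaf paths in $\TT$, which hurts the $O(\nt/w^\ast)$ term. Pinning down the explicit exponent $0.19$ therefore requires a joint calibration of these two constants read off directly from the merging rules of~\cite{TopTrees}; cruder generic bounds (for instance, counting all ordered trees on $w$ edges without using the top tree balance invariants) yield only constant-factor compression in the worst case and are insufficient to obtain the claimed bound.
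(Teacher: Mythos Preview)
Your high-level plan---split into small and large clusters, enumerate the small ones, and count the large ones---matches the paper, but the two load-bearing claims you make are wrong, and the source of the exponent $0.19$ is misidentified.

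First, the assertion that ``along any root-to-leaf path the cluster weight decreases by a multiplicative constant per step, so only $O(\nt/w)$ nodes of $\TT$ have weight exceeding $w$'' does not follow from the top tree construction. Lemma~\ref{lem:contractionfactor} guarantees that the \emph{total number} of clusters shrinks by a factor $c=8/7$ per round; it says nothing about individual merges being weight-balanced, and in fact they need not be. The correct accounting is by \emph{round}, not by weight: after round $i$ there are $O(\nt/c^i)$ clusters, each of size at most $2^i+1$. Setting the threshold at round $j$ gives $\sum_{i>j} O(\nt/c^i)=O(\nt/c^{j})$ large clusters---not $O(\nt/2^j)$. If your $O(\nt/w)$ bound were true you would immediately get $O(\nt/\log_\sigma \nt)$, matching the information-theoretic lower bound, which is stronger than what the theorem claims.

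Second, the exponent $0.19$ is precisely $\log_2(8/7)$. It arises from the mismatch between the per-round growth factor $2$ in maximum cluster size and the per-round shrink factor $8/7$ in cluster count: choosing $j$ so that $2^j\approx \log_\sigma \nt$ (so that the enumeration of small $\TT$-subtrees stays sublinear) yields $c^j=(8/7)^{\log_2\log_\sigma \nt}=(\log_\sigma \nt)^{\log_2(8/7)}$. There is no ``joint calibration'' with the Catalan base, no reduction of the effective base via balance conditions, and the threshold is at cluster size $\Theta(\log_\sigma \nt)$, not $\Theta(\log_\sigma^{0.19}\nt)$.

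A smaller point: the enumeration should be over rooted subtrees of $\TT$ (labeled binary trees on an alphabet of size $\sigma^2+5$), not over cluster shapes in $T$. Two occurrences of the same tree pattern in $T$ need not have identical $\TT$-subtrees, since the merge order inside a cluster depends on its context in $T$; counting cluster shapes therefore does not directly bound the number of top-DAG nodes.
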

This worst-case performance of the top DAG should be compared to the information-theoretic lower bound of $\Omega(\nt/\log_\sigma \nt)$. This lower bound applies already for strings (so it clearly holds for labeled trees). It is obtained by simply noticing that there are $\Omega(\sigma^{\nt})$ string of length $\nt$ over an alphabet of size $\sigma$, implying a lower bound of  $\Omega(\nt \log \sigma)$ bits or $\Omega(\nt /\log_\sigma \nt)$ words. Note that with standard DAG compression the worst-case bound is $\Theta(\nt)$ since a single path is incompressible using subtree repeats.

Secondly, we compare top DAG compression to standard DAG compression. 
\begin{theorem}\label{thm:comparison}
Let $T$ be any ordered tree and let $D$ and $\TD$ be the corresponding DAG and top DAG, respectively. For any tree $T$ we have  $\ntd = O(\log \nt) \cdot \nd$ and there exist families of trees $T$ such that $\nd = \Omega (\nt/ \log \nt) \cdot \ntd$.
\end{theorem}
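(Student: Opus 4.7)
The statement has two parts: an upper bound showing the top DAG is never much larger than the DAG $D$, and a lower bound exhibiting a family for which the top DAG is exponentially smaller than $D$. I would prove them separately.

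\emph{Upper bound} $\ntd = O(\log \nt) \cdot \nd$. I would work directly with the top tree $\TT$ of $T$, which is a binary merge tree of height $h = O(\log \nt)$ and hence stratifies naturally into $h$ levels. The plan is to show that at each level of $\TT$ the number of pairwise non-isomorphic clusters is $O(\nd)$, which gives $\ntd = O(\nd \log \nt)$. The key structural observation is that if two nodes $u,v$ of $T$ have isomorphic rooted subtrees (equivalently, are identified in $D$), then the top tree construction, being deterministic and local, produces identical cluster hierarchies inside those subtrees; so the clusters that lie entirely inside a rooted subtree are determined by $D$. To account for clusters that straddle a rooted-subtree boundary, I would set up a charging scheme: assign each cluster at level $\ell$ to the unique maximal rooted subtree of $T$ that contains it, and argue that each distinct rooted subtree of $D$ is charged $O(1)$ times per level.

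\emph{Lower bound.} To witness $\nd = \Omega(\nt/\log \nt)\cdot \ntd$, I would take the single-label path: let $T$ be a path on $\nt$ nodes all carrying the same label. The $\nt$ rooted subtrees of $T$ are paths of pairwise distinct lengths, hence pairwise non-isomorphic, so $\nd = \Theta(\nt)$. In contrast, the top tree of such a path is essentially a balanced binary compression tree, and at each of its $O(\log \nt)$ levels all clusters are single-label paths of the same length, hence pairwise isomorphic. Thus the top tree has only $O(\log \nt)$ distinct clusters, giving $\ntd = O(\log \nt)$ and the required ratio.

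\emph{Main obstacle.} The upper bound is the delicate part. Clusters need not lie inside a single rooted subtree: they can straddle the root of an otherwise repeating subtree, and one must bound the number of distinct straddling clusters at each level without double counting. The right charging argument must track how clusters at successive levels grow across such boundaries, and must verify that the per-subtree per-level charge is indeed $O(1)$ under the specific rake and compress merge rules used by the top tree construction. Turning the informal idea above into a clean bijection or charging, and checking the edge cases of the merge rules, is the main technical step.
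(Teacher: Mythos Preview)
Your lower bound argument (single-label path) is correct and is exactly what the paper does.

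For the upper bound, you have isolated the right key structural fact: identical rooted subtrees $T(z)$ and $T(z')$ receive identical cluster decompositions from the top-tree construction (this is the paper's Lemma~\ref{lem:top-tree-ident}). However, the paper does \emph{not} organise the count level by level. Instead it colours each cluster of $\TT$ \emph{red} (all its edges lie in some shared subtree of the DAG), \emph{blue} (none do), or \emph{purple} (both), and bounds each class separately: red and blue clusters total $O(\nd)$ using the determinism lemma together with $|\TT(C)|=O(|C|)$ (Lemma~\ref{lem:subtresize}), while purple clusters form an upward-closed subtree of $\TT$ whose minimal elements each contain a distinct blue edge, giving at most $b_D$ of them and hence $O(b_D\log\nt)$ purple clusters in total.

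Your proposed charging scheme, as stated, does not work: you say ``assign each cluster to the unique \emph{maximal} rooted subtree of $T$ that contains it,'' but every cluster is contained in $T$ itself, so everything is charged to the root and the bound collapses. If you meant the \emph{minimal} containing rooted subtree (i.e., the subtree at the cluster's top boundary node), you then need to argue that two clusters whose top boundary nodes sit at corresponding positions in isomorphic subtrees are themselves isomorphic as labelled top-tree subtrees, and that is precisely where the ``straddling'' problem bites: a cluster rooted at $z$ can extend above $z$ along the spine, so its isomorphism type depends on edges outside $T(z)$. The paper's red/blue/purple partition is exactly the device that cleanly separates the clusters that the determinism lemma controls from the boundary-crossing ones, and then bounds the latter by a path-counting argument rather than a per-subtree charge.
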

Thus, top DAG compression can be exponentially better than DAG compression (since it's possible that $\nd =O(\log \nt)$) and it is always within a logarithmic factor of DAG compression.  
To the best of our knowledge this is the first non-trivial bound shown for any tree compression scheme compared to the DAG.

Finally, we show how to represent the top DAG $\TD$ in $O(\ntd)$ space such that we can quickly answer a wide range of queries about $T$ without decompressing. 
\begin{theorem}\label{thm:navigation}
Let $T$ be an ordered tree with top DAG $\TD$. There is an $O(n_{\TD})$ space representation of $T$ that supports $\Access$, $\Depth$, $\Height$, $\Size$, $\Parent$, $\FirstChild$, $\NextSibling$, $\LevelAncestor$, and $\NCA$ in $O(\log n_{T})$ time. Furthermore, we can  $\Decompress$ a subtree $T'$ of $T$ in time $O(\log \nt + |T'|)$. 
\end{theorem}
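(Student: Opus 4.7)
The plan is to augment each node of the top DAG with a constant amount of precomputed aggregate information about the cluster it represents, and then to answer every query via a single root-to-cluster descent of length at most the $O(\log \nt)$ height of the top tree. For each top DAG node $C$ I would store the cluster's size, its height, label information at its (at most two) boundary vertices, the vertical distance between its top and bottom boundaries, and a constant-size description of how $C$'s two children combine (merge type, together with the preorder offset at which the right child's nodes appear relative to the left). Each annotation uses $O(1)$ words, so the total space is $O(\ntd)$.

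I would identify each node of $T$ by its preorder rank. To compute $\Access$, $\Depth$, $\Size$, $\Height$, or $\Parent$ on a node $v$, the algorithm descends the top DAG starting from the root cluster: at each cluster the size annotations of the two children identify which one contains the node of preorder $v$, and we recurse into that child while updating a running accumulator (for depth, size, or preorder offset) appropriate to the query. The descent terminates at a leaf cluster of constant size (a single edge of $T$), from which the answer is read off directly. $\FirstChild$ and $\NextSibling$ reduce to the same kind of descent, followed by an $O(1)$ lookup in the bottom cluster.

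For $\NCA(u,v)$, I would run a single descent that takes the unique child of the current cluster containing both $u$ and $v$ whenever one exists; the first step at which the two nodes separate identifies a cluster whose boundary vertex is their nearest common ancestor, and the correct boundary vertex can be read in $O(1)$ from the annotation. For $\LevelAncestor(v,d)$, the descent toward $v$ also maintains the depth of the top boundary of the current cluster; using the stored vertical distances, we decide in $O(1)$ whether the depth-$d$ ancestor lies in the child we are about to enter or elsewhere, and recurse accordingly. For $\Decompress$, I would first locate in $O(\log \nt)$ time an $O(\log \nt)$-sized collection of top DAG clusters whose union is exactly the subtree $T'$ rooted at $v$, and then recursively expand these clusters in $O(|T'|)$ time by traversing the induced portion of the top DAG.

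The hard part will be pinning down a single set of $O(1)$ annotations at each cluster that suffices for \emph{every} query to make each branching decision in $O(1)$ time, together with correctly handling the cases that arise at the bottom of the descent---especially for $\NCA$ and $\LevelAncestor$, where the answer is a specific node inside a shared cluster and must be named by its preorder rank in $T$ rather than by its position within the cluster. Once the annotations are fixed, correctness and the $O(\log \nt)$-time, $O(\ntd)$-space bounds follow directly from the $O(\log \nt)$ height of the top tree combined with the $O(1)$ cost per descent step.
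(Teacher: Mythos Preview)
Your high-level plan---augment each top DAG node with $O(1)$ words of aggregate information and answer queries by walking the DAG---is exactly what the paper does, and the space bound follows immediately. However, two of your specific claims do not go through as stated.

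First, a \emph{single} top-down descent is not enough for the queries that must return a preorder number in $T$ (\textsf{Parent}, \textsf{FirstChild}, \textsf{LevelAncestor}, \textsf{NCA}). You write that you will maintain a running ``preorder offset'' so that the answer's rank in $T$ can be read off at the bottom. But in a vertical merge $C=A\cup B$ (types (a)/(b)), the nodes of $A$ do not form a contiguous preorder interval of $C$: $B$ is sandwiched between the two halves of $A$. Consequently the local-to-parent preorder map is piecewise linear with two pieces, and after $O(\log n_T)$ levels of composition the local-to-global map can have $\Theta(n_T)$ pieces; you cannot summarize it by a constant-size accumulator carried down the path. The paper sidesteps this by a two-pass scheme: descend to identify the answer's local preorder in some cluster, then ascend back to the root, at each step translating the local preorder into the parent cluster via the stored offsets $\ell(B), r(A), r(B)$ (their Lemma on local preorder numbers). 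Your ``hard part'' paragraph recognizes exactly this difficulty but does not resolve it. Relatedly, for \textsf{NCA} your rule ``stop at the first cluster where $u$ and $v$ separate'' is incorrect when the separating merge is vertical: if $u\in A$ and $v\in B$ then $\nca(u,v)$ lies somewhere on the spine inside $A$, not at a boundary of $C$; the paper therefore continues into $A$, replacing $v$ by the bottom boundary of $A$, and only stops at a horizontal merge.

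Second, a single descent to a leaf cluster does not yield $\Size(x)$ or $\Height(x)$, because the subtree $T(x)$ is in general not contained in any one cluster on the path to $x$. The paper's solution is a separate procedure (\textsf{FindRepresentatives}) that identifies the highest cluster $M$ with $x$ as top boundary and then walks \emph{upward}, collecting the $O(\log n_T)$ off-path clusters whose union is exactly $T(x)$; $\Size$ and $\Height$ are then sums over this set, and $\NextSibling(x)=x+\Size(x)$. The correctness of this collection hinges on a characterization of which off-path siblings lie inside $T(x)$ in terms of whether $x$ lies on the spine path of the current cluster; your proposal has no analogue of this step. Your plan for $\Decompress$ presupposes the same missing ingredient.
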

The operations $\Access$, $\Depth$, $\Height$, $\Size$, $\Parent$, $\FirstChild$, and $\NextSibling$ all take a node $v$ in $T$ as input\footnote{The nodes of $T$ are uniquely identified by their preorder numbers. See Section~\ref{section4}.} and return its label, its depth, its height, the size of its subtree, its parent, its first child, and its sibling to the right, respectively. The $\LevelAncestor$ returns an ancestor at a specified distance from $v$, and $\NCA$ returns the nearest common ancestor to a given pair of nodes. Finally, the $\Decompress$ operation decompresses and returns any rooted subtree. 

\section{Top Trees and Top DAGs}

Top trees were introduced by Alstrup et al.~\cite{AHLT1997,AHT2000,TopTrees} for maintaining an uncompressed, unordered, and unlabeled tree under link and cut operations. We extend them to ordered and labeled trees, and then introduce top DAGs for compression. Our construction is related to well-known algorithms for top tree construction, but modified for our purposes. In particular, we need to carefully order the steps of the construction to guarantee efficient compression, and we disallow some combination of cluster merges to ensure fast navigation. 

\subsection{Clusters} 
Let $v$ be a node in $T$ with children $v_1, \ldots, v_k$ in left-to-right order. Define $T(v)$ to be the subtree induced by $v$ and all proper descendants of $v$. Define $F(v)$ to be the forest induced by all proper descendants of $v$. For $1\leq s \leq r \leq k$ let $T(v, v_s, v_r)$ be the  tree pattern induced by the nodes $\{v\} \cup T(v_s) \cup T(v_{s+1}) \cup \cdots \cup T(v_r)$.

A {\em cluster} with {\em top boundary node} $v$ is a tree pattern of the form $T(v, v_s, v_r)$, $1\leq s \leq r \leq k$. A {\em cluster} with {\em top boundary node} $v$ and {\em bottom boundary node} $u$ is a tree pattern of the form $T(v, v_s, v_r) \setminus F(u)$, $1 \leq s \leq r \leq k$, where $u$ is a node in $ T(v_s) \cup  \cdots \cup T(v_r)$.
Clusters can therefore have either one or two boundary nodes. For example, let $p(v)$ denote the parent of $v$ then a single edge 
$(v,p(v))$  of $T$ is a cluster where $p(v)$ is the top boundary node. If $v$ is a leaf then there is no bottom boundary node, otherwise $v$ is a bottom boundary node. Nodes that are not boundary nodes are called 
\emph{internal} nodes. 

Two edge disjoint clusters $A$ and $B$ whose vertices overlap on a single boundary node can be \emph{merged} if their union $C = A \cup B$ is also a cluster. There are five ways of merging clusters, as illustrated by Fig.~\ref{fig:merge}. Merges of type (a) and (b) can be done if the common boundary node is not a boundary node of any other cluster except $A$ and $B$. Merges of type (c),(d), and (e) can be done only if at least one of $A$ or $B$ does not have a bottom boundary node.
The original paper on top trees~\cite{AHLT1997,AHT2000,TopTrees} contains more ways to merge clusters, but allowing these would lead to a violation of our definition of clusters as a tree pattern of the form $T(v, v_s, v_r) \setminus F(u)$, which we need for navigational purposes.
\begin{figure}[tb]
   \centering
   \includegraphics[scale=0.4]{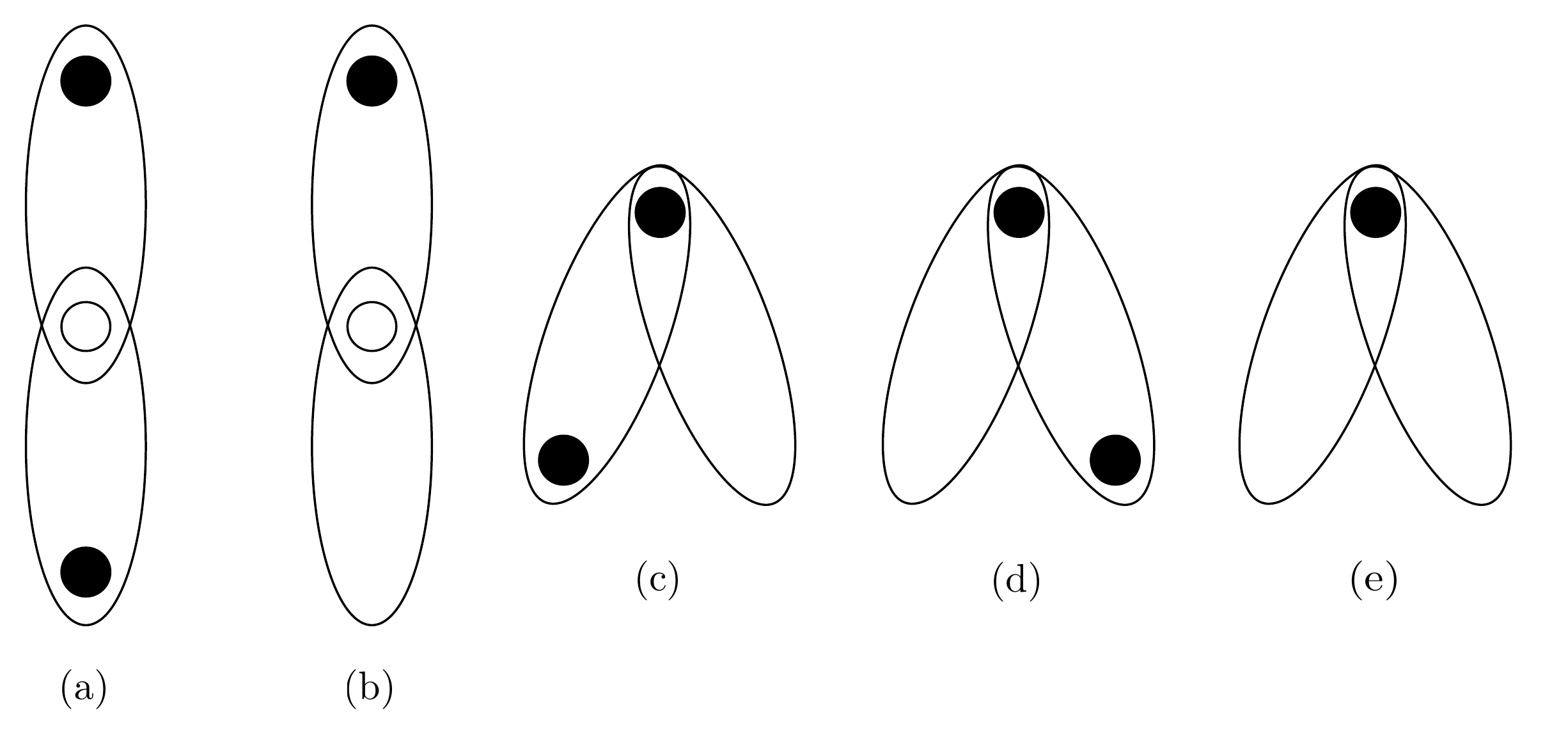}
   \caption{Five ways of merging clusters. The $\bullet$ nodes are boundary nodes that remain  boundary nodes in the merged cluster. The $\circ$ nodes are  boundary nodes that become internal (non-boundary) nodes in the merged cluster. Note that in the last four merges at least one of the  merged clusters has a top boundary node but no bottom boundary node.}
   \label{fig:merge}
\end{figure}

\subsection{Top Trees}
A \emph{top tree} $\TT$ of $T$ is a hierarchical decomposition of $T$ into  clusters. It is 
an ordered, rooted, labeled, and binary tree defined as follows.
\begin{itemize}
\item[$\bullet$] The nodes of $\TT$ correspond to clusters of $T$.
\item[$\bullet$] The root of $\TT$ corresponds to the cluster $T$ itself.
\item[$\bullet$] The leaves of $\TT$ correspond to the edges of $T$. The label of each leaf is the pair of labels of the endpoints of its corresponding edge $(u,v)$ in $T$. The two labels are ordered so that the label of the parent  appears before the label of the child. 
\item[$\bullet$] Each internal node of $\TT$ corresponds to the merged cluster of its two children. 
The label of each internal node is the type of merge it represents (out of the five merging options).  The children are ordered so  that the left child is the child cluster visited first in a preorder traversal of $T$. 
\end{itemize}

\subsection{Constructing the Top Tree}\label{sec:toptreeconstruction}

We now describe a greedy algorithm for constructing a top tree $\TT$ of $T$ that has height $O(\log \nt)$. The algorithm constructs the top tree $\TT$ bottom-up in $O(\log n_T)$ iterations starting with the edges of $T$ as the leaves of $\TT$. 
During the construction, $\TT$ is a forest, and we maintain an auxiliary rooted ordered tree $\Taux$ initialized as $\Taux := T$. The edges of $\Taux$ will correspond to the nodes of $\TT$ and to the clusters of $T$. 
The internal nodes of $\Taux$ will correspond to boundary nodes of clusters in $T$, and the leaves of $\Taux$   will correspond to a subset of the leaves of $T$.

In the beginning, these clusters represent actual edges $(v,p(v))$ of $T$. In this case, if $v$ is not a leaf in $T$ then $v$ is the bottom boundary node of the cluster and $p(v)$ is the top boundary node. If $v$ is a leaf then there is no bottom boundary node. 

In each one of the $O(\log n_T)$ iterations, a constant fraction of $\Taux$'s edges (i.e., clusters of $T$) are merged. 
Each merge is performed on two overlapping edges $(u,v)$ and $(v,w)$ of $\Taux$ using one of the five types of merges from Fig.~\ref{fig:merge}: If $v$ is the parent of $u$ and the only child of $w$ then a merge of type (a) or (b) contracts these edges in $\Taux$ into the edge $(u,w)$. If $v$ is the parent of both $u$ and $w$, and $w$ or $u$ are  leaves, then a merge of type (c), (d), or (e) replaces these edges in $\Taux$ with either the edge $(u,v)$ or $(v,w)$. In all cases, we create a new node in $\TT$ whose two children are the clusters corresponding to $(u,v)$ and to $(v,w)$. 

We prove below that a single iteration shrinks the tree $\Taux$ (and the number of roots in $\TT$) by a constant factor. The process ends when $\Taux$ is a single edge. Each iteration is performed as follows:

\paragraph{Step 1: Horizontal Merges.} For each node $v\in \Taux$ with $k\ge 2$ children $v_1, \ldots, v_k$, for $i=1$ to $\lfloor k/2 \rfloor$,
merge the edges $(v,v_{2i-1})$ and $(v, v_{2i})$  if $v_{2i-1}$ or $v_{2i}$ is a leaf.
If $k$ is odd  and $v_k$ is a leaf and both $v_{k-2}$ and $v_{k-1}$ are non-leaves then also merge $(v,v_{k-1})$ and $(v, v_{k})$.

\paragraph{Step 2: Vertical Merges.} For each maximal path $v_1, \ldots, v_p$ of nodes in $\Taux$ such that $v_{i+1}$ is the parent of $v_{i}$ and $v_2, \ldots, v_{p-1}$ have a single child: If $p$ is even merge the following pairs of  edges $\{(v_1,v_2),(v_2, v_3)\}, \{(v_3,v_4),(v_4, v_5)\},\ldots , %\{(v_{p-3},v_{p-2}),
(v_{p-2}, v_{p-1})\}$. If $p$ is odd  merge the following pairs of  edges 
$\{(v_1,v_2),(v_2, v_3)\}, \{(v_3,v_4),(v_4, v_5)\},\ldots ,%\{(v_{p-4},v_{p-3}),
(v_{p-3}, v_{p-2})\}$, and %. If $p$ is odd and 
if $(v_{p-1}, v_p)$ was not %just 
merged in Step 1 then also merge $\{(v_{p-2},v_{p-1}),(v_{p-1}, v_p)\}$.

\begin{lemma}\label{lem:contractionfactor}
A single iteration shrinks $\Taux$ by a factor of $c\ge 8/7$.
\end{lemma}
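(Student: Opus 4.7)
My plan is to lower bound the number of merge operations performed by one iteration by $N/8$, where $N = n_\Taux - 1$ is the current number of edges. Since every merge (either horizontal or vertical) combines two edges into one and therefore deletes exactly one node of $\Taux$, this gives $n_\Taux^{\text{new}} \leq (7/8)\, n_\Taux^{\text{old}}$ up to lower-order terms, which is the claimed shrinkage factor.

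First I would partition the nodes of $\Taux$ into three classes: the leaves ($\Lambda$ in number), the degree-$1$ internal nodes ($\Pi$), and the branching nodes with at least two children ($B$), so that $N+1 = \Lambda + \Pi + B$. Summing the handshake identity $\sum_v (\text{children of }v) = N$ over all nodes yields $\Lambda \geq B+1$ for any rooted tree. Grouping the $\Pi$ degree-$1$ nodes into maximal vertical paths gives exactly $P = N - \Pi = \Lambda + B - 1$ maximal paths.

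Next I would bound horizontal and vertical merges separately. For a branching node $v$ with $k_v$ children, of which $L_v$ are leaves and $I_v$ are internal, the consecutive pairing of Step~1 merges every pair containing a leaf; the adversary minimises this by clustering internals into pairs of two, producing at most $\lfloor I_v/2\rfloor$ non-merging pairs and hence at least $\lfloor k_v/2\rfloor - \lfloor I_v/2\rfloor \geq \lfloor L_v/2\rfloor$ horizontal merges at $v$. The odd-$k$ tie-breaking clause ensures that at most one leaf per branching node can escape, so the total number of horizontal merges is at least $(\Lambda_B - B)/2$, where $\Lambda_B$ denotes the number of leaves whose parent is branching. For vertical merges, on each maximal path of $p$ nodes Step~2 performs $\lfloor(p-2)/2\rfloor$ explicit merges, plus one more via the odd-length rule whenever the topmost edge was not already consumed in Step~1; thus at least $(p-2)/2 - 1/2$ merges per path, summing to at least $(\Pi - P)/2 = (\Pi - \Lambda - B + 1)/2$ vertical merges in total.

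Finally I would combine the two bounds. Since every leaf of $\Taux$ is either a branching leaf (contributing to $\Lambda_B$) or the bottom of a maximal path (contributing to the path whose bottom is a leaf), and using $\Lambda \geq B+1$ together with $\Lambda + \Pi + B = N+1$, a short three-way case analysis on which of $\Lambda_B$, $\Lambda - \Lambda_B$, or $\Pi$ dominates $N$ yields horizontal-plus-vertical merges of at least $N/8$. The main obstacle I expect is handling the interface edge at the top of a path whose top node is branching: this edge is eligible for both a horizontal merge (at its top endpoint) and a vertical merge (with the next path edge), and the odd-length special clause in Step~2 exists precisely to avoid double-counting while guaranteeing that exactly one of the two firings occurs. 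Verifying this cleanly amounts to a case split on the parity of $p$ and on the position of the topmost path edge among its siblings at the branching top, which is tedious but routine.
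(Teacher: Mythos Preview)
Your approach is genuinely different from the paper's, but the plan has a real gap at exactly the point you label ``tedious but routine.''

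\medskip

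\textbf{Where the plan breaks.} Both of your intermediate inequalities can be simultaneously useless. Take the tree whose root has $k$ internal children, each of which has a single leaf child. Then $\Lambda_B=0$, $\Pi=k$, $\Lambda=k$, $B=1$, $N=2k$, $P=k$, so your horizontal bound is $(\Lambda_B-B)/2=-1/2$ and your vertical bound is $(\Pi-P)/2=0$; yet the actual iteration performs $k=N/2$ merges, all coming from the odd-$p$ clause you deferred. More generally, on any tree where every maximal path has $p=3$ and the top edges fail to merge horizontally, \emph{all} the merges come from that clause, so it is not a correction term but the entire content. Similarly, when each branching node has exactly one leaf child and one internal child you get $\Lambda_B=B$ and your horizontal bound collapses to $0$, while in fact $H=B$; the slack you threw away by writing $\lfloor L_v/2\rfloor\ge (L_v-1)/2$ is precisely what you need. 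So the promised ``three-way case analysis on which of $\Lambda_B$, $\Lambda-\Lambda_B$, $\Pi$ dominates'' cannot close using only the two displayed bounds plus the clause as an afterthought; you would need a finer partition of paths by parity and by whether the top edge is consumed in Step~1, and a finer horizontal count distinguishing branching nodes by the parity of $L_v$. That is a substantially longer argument than you indicate, and you have not shown it terminates at $N/8$.

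\medskip

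\textbf{What the paper does instead.} The paper avoids global counting entirely. It looks only at the edges $(v,p(v))$ with $v$ of degree $\le 1$, of which there are at least $n/2$, and gives a local charging scheme: every such edge that is \emph{not} merged is charged to a uniquely identified neighbouring edge of the same type that \emph{is} merged, via three short cases (leaf with a sibling; leaf without a sibling; degree-$1$ node). Uniqueness of the charge is verified case against case. This immediately yields that at least half of these $\ge n/2$ edges participate in a merge, hence at least $n/8$ merge operations, hence the $8/7$ factor. The charging argument sidesteps all the parity bookkeeping and the Step~1/Step~2 interface that your global count must track.
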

%The proof can be found in the appendix.
\begin{proof}
Suppose that  in the beginning of the iteration the tree $\Taux$ has $n$ nodes. 
Any tree with $n$ nodes has at least $(n+1)/2$ nodes with less than $2$ children. Consider the edges $(v_i,p(v_i))$ of $\Taux$ where $v_i$ has one or no children. We show that at least half of these  $n/2$ edges are merged in this iteration. This will imply that $n/4$ edges of $\Taux$ are replaced with $n/8$ edges and so the size of $\Taux$ shrinks to $7n/8$. To prove it, we charge each edge $(v_i,p(v_i))$ that is not merged to a unique edge $f(v_i,p(v_i))$ that is merged.

\paragraph{Case 1.} 
Suppose that $v_i$ has no children (i.e., is a leaf).   If $v_i$ has at least one sibling and $(v_i,p(v_i))$ is not merged it is because $v_i$ has no right sibling and its left sibling $v_{i-1}$ has already been merged (i.e., we have just merged $(v_{i-2},p(v_{i-2}))$ and $(v_{i-1},p(v_{i-1}))$ in Step 1 where $p(v_i)=p(v_{i-1})=p(v_{i-2})$). 
We also know that at least one of $v_{i-1}$ and $v_{i-2}$ must be a leaf. We set $f(v_i,p(v_i)) = (v_{i-1},p(v_{i-1}))$ if $v_{i-1}$ is a leaf, otherwise we set $f(v_i,p(v_i)) = (v_{i-2},p(v_{i-2}))$. 

\paragraph{Case 2.} 
Suppose that $v_i$ has no children (i.e., is a leaf) and no siblings (i.e., $p(v_i)$ has only one child). The only reason for not merging $(v_i,p(v_i))$ with $(p(v_i),p(p(v_i)))$ in Step 2 is because  $(p(v_i),p(p(v_i)))$ was just merged in Step 1. In this case, we set $f(v_i,p(v_i)) = (p(v_i),p(p(v_i)))$. Notice that we haven't already charged $(p(v_i),p(p(v_i))$ in {\em Case 1} because $p(v_i)$ is not a leaf.

\paragraph{Case 3.} 
Suppose that $v_i$ has exactly one child $c(v_i)$ and that $(v_i,p(v_i))$ was not merged in Step 1. 
The only reason for not merging $(v_i,p(v_i))$ with $(c(v_i),v_i)$ in Step 2 is if $c(v_i)$ has only one child $c(c(v_i))$ and we just merged  $(c(v_i),v_i)$ with  $(c(c(v_i)),c(v_i))$. In this case, we set $f(v_i,p(v_i)) = (c(v_i),v_i)$.
Notice that we haven't already charged $(c(v_i),v_i)$ in {\em Case 1} because $c(v_i)$ is not a leaf. We also haven't  charged $(c(v_i),v_i)$ in {\em Case 2} because $v_i$ has only one child.  \qed
\end{proof}

\noindent Since each iteration can be done in linear time and shrinks $\Taux$ by a factor $> 1$ we obtain the following.
\begin{corollary}
Given a tree $T$, the greedy top tree construction creates a top tree of size  $O(\nt)$ and height $O(\log n_T)$ in $O(n_T)$ time.
\end{corollary}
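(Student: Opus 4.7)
The plan is to establish the three claims (size, height, construction time) by directly using Lemma~\ref{lem:contractionfactor} together with the structural observation that the top tree $\TT$ is a binary tree whose leaves correspond precisely to the edges of the initial $\Taux = T$.

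First I would bound the size. Since every internal node of $\TT$ is created by merging exactly two edges of the current $\Taux$ into one, the leaves of $\TT$ are exactly the initial edges of $\Taux$, of which there are $n_T - 1$. Because $\TT$ is a binary tree, the total number of nodes is at most $2(n_T - 1) - 1 = O(n_T)$, giving the size bound.

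Next, for the height bound, I would argue that every node of $\TT$ at height $h$ above the leaves is created no earlier than iteration $h$, so the height of $\TT$ is at most the number of iterations. By Lemma~\ref{lem:contractionfactor}, iteration $i$ leaves $\Taux$ with at most $(7/8)^i \cdot n_T$ edges. The construction terminates when $\Taux$ consists of a single edge, so the number of iterations satisfies $(7/8)^i \cdot n_T < 2$, giving $i = O(\log n_T)$. (One small care point: the height of $\TT$ could in principle exceed the iteration count if a single iteration contributed more than one level above a given leaf; however, within one iteration Step~1 merges only pairs of leaf-children and Step~2 merges disjoint pairs of vertical edges, so each previously existing cluster participates in at most one merge per iteration, confirming that the height equals at most the number of iterations.)

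Finally, for the running time, each iteration can be executed in time linear in the current size of $\Taux$: Step~1 requires a single pass over every node's children, and Step~2 requires identifying maximal single-child paths and pairing their edges, both doable in $O(|\Taux|)$ time. Since $|\Taux|$ shrinks by a constant factor $\ge 8/7$ each iteration, the total work across all iterations is a geometric series summing to $O(n_T)$.

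The only subtle step is the height argument, since one must verify that a single iteration indeed adds only one level to $\TT$ despite possibly merging many disjoint pairs in parallel; this follows from the fact that Steps~1 and~2 are scheduled so that each cluster is involved in at most one merge per iteration, which is the reason the construction is organized into discrete rounds in the first place.
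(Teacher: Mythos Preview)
Your argument is correct and is exactly the fleshed-out version of the paper's one-line justification (``Since each iteration can be done in linear time and shrinks $\Taux$ by a factor $> 1$ we obtain the following''): size via counting leaves of the binary tree $\TT$, height via the geometric shrinking from Lemma~\ref{lem:contractionfactor}, and time via the geometric sum of per-iteration costs.

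One minor quibble in the height argument: your claim that ``each previously existing cluster participates in at most one merge per iteration'' and hence ``the height equals at most the number of iterations'' is slightly off. An edge produced by a horizontal merge in Step~1 can itself be merged vertically in Step~2 of the \emph{same} iteration (for instance, if a node $v$ with exactly two leaf children has them merged in Step~1, the resulting edge below $v$ may then be merged with $(v,p(v))$ in Step~2). So a single iteration can add two levels to $\TT$, not one. This constant-factor slack is harmless for the $O(\log n_T)$ bound, but the intermediate statement as written is not literally true.
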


The next lemma follows from the construction of the top tree and Lemma~\ref{lem:contractionfactor}.
\begin{lemma}\label{lem:subtresize}
For any node $c$ in the top tree corresponding to a cluster $C$ of $T$, the number of nodes in the subtree $\TT(c)$ is $O(|C|)$.
\end{lemma}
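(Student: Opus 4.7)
The plan is to exploit the fact that the top tree is a strict binary tree whose leaves are in bijection with the edges of the clusters they represent. Since every internal node of $\TT$ is created by merging exactly two children clusters, the subtree $\TT(c)$ is itself a strict binary tree and its size is at most $2L - 1$, where $L$ is the number of leaves in $\TT(c)$.

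Next I would identify what the leaves of $\TT(c)$ actually count. By the construction in Section~\ref{sec:toptreeconstruction}, the leaves of $\TT$ correspond bijectively to the edges of $T$, and when we merge two edge-disjoint clusters $A$ and $B$ into a cluster whose top-tree node is $c' = c_A \cup c_B$, the subtree $\TT(c')$ has leaves that are exactly the disjoint union of the leaves of $\TT(c_A)$ and $\TT(c_B)$. A straightforward induction on the depth of $c$ in $\TT$ (using the contraction process guaranteed by Lemma~\ref{lem:contractionfactor} to ensure termination) then shows that the leaves of $\TT(c)$ are exactly the edges of $T$ contained in the cluster $C$.

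Finally, I would bound the number of edges of $T$ that lie inside $C$. Because a cluster has the form $T(v,v_s,v_r) \setminus F(u)$, it is a connected tree pattern on $|C|$ nodes and therefore contains exactly $|C|-1$ edges. Combining this with the binary-tree bound gives $|\TT(c)| \le 2(|C|-1) - 1 = O(|C|)$, as claimed.

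The only mildly subtle step is the bijection in the second paragraph: one must check that no edge of $T$ gets counted twice as we merge clusters, and that the horizontal/vertical merging rules of Section~\ref{sec:toptreeconstruction} preserve the invariant that the leaves below any node of $\TT$ are precisely the edges of the corresponding cluster. This is immediate from the fact that each of the five merge types in Figure~\ref{fig:merge} takes two edge-disjoint clusters sharing only a boundary node and produces their union, so I do not expect any real obstacle.
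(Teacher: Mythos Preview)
Your proof is correct and in fact more elementary than the route the paper indicates. The paper does not spell out an argument; it merely asserts that the lemma follows from the construction together with Lemma~\ref{lem:contractionfactor}, the intended reasoning presumably being that the clusters contained in $C$ shrink by a constant factor in each round, so the number of nodes of $\TT(c)$ at the successive levels forms a geometric series summing to $O(|C|)$.

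You bypass the geometric argument entirely by observing that $\TT(c)$ is a full binary tree whose leaves are precisely the edges of the cluster $C$, yielding the explicit bound $|\TT(c)| = 2(|C|-1)-1$. This is cleaner and gives an actual constant rather than just a big-$O$. One minor remark: your appeal to Lemma~\ref{lem:contractionfactor} ``to ensure termination'' of the induction is unnecessary --- the induction is on depth in a finite tree, and nothing about the shrinking rate is used anywhere in your argument. You may simply drop that parenthetical.
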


\subsection{Top Dags}
The \emph{top DAG} of $T$, denoted $\TD$, is the minimal DAG representation of the top tree $\TT$.  It can be computed in $O(\ntt)$ time from $\TT$ using the algorithm of~\cite{DST1980}.
The entire  top DAG construction can thus be done in $O(\nt)$ time.

%- Question: Can we compress in $O(n)$ additional space, where $n$ is the size of the top DAG?

\section{Compression Analysis}

\subsection{Worst-case Bounds for Top Dag Compression}
We now prove Theorem~\ref{thm:worstcasebounds}. 
Let $T$ be an ordered tree with $\nt$ nodes labeled from an alphabet of size $\sigma$, let $\TT$ be its top tree and $\TD$ be its top DAG. We call two rooted subtrees of $\TT$ {\em identical} if they have the same structure and labels, otherwise they are called {\em distinct}.  
To show that  the size of $\TD$ is at most 
$ O(\nt/\log_\sigma^{0.19}\!\nt)$ is suffices to show that $\TT$ has only $O(n_T/\log_\sigma^{0.19}n_T)$ distinct rooted subtrees.

Recall that each node in the top tree $\TT$ corresponds to a cluster in $T$. A leaf of $\TT$ corresponds to a cluster of a single edge of $T$ and is labeled by this edges endpoints (so there are $O(\sigma^2)$ possible labels). An internal node is labeled by the type of merge that formed it (there are five merging options so there are five possible labels). 

The bottom-up construction of  $\TT$ starts with the leaves of $\TT$. By Lemma~\ref{lem:contractionfactor} each level in the top tree reduces the number of clusters by a factor $c=8/7$, while at most doubling the size of the current clusters (the  size of a cluster is the number of nodes in the corresponding tree pattern). After round $i$ we are therefore left with at most $O(n_T/c^i)$ clusters, each of size at most $2^{i} + 1$. 

To bound the total number of distinct rooted subtrees, we partition the clusters into \emph{small clusters} and \emph{large clusters}. 
The small clusters are those created in rounds $1$ to $j= \log_2 (0.5 \log_{4\sigma^2}(n_T)) = O( \log_2 \log_{\sigma}n_T)$ and the large clusters are those created in the remaining rounds from $j+1$ to $h$.  The total number of large clusters is at most 
$$
\sum_{i = j+1}^h O({n_T}/{c^i}) = O({n_T}/{c^{j+1}}) = O({n_T}/{
\log_\sigma^{0.19} n_T}).
$$ 
In particular, there are at most $O(n_T/\log_\sigma^{0.19} n_T)$ nodes of $\TT$ that correspond to large clusters. So clearly there are at most $O(n_T/\log_\sigma^{0.19} n_T)$ distinct subtrees rooted at these nodes. 

Next, we bound the total number of distinct subtrees of $\TT$ rooted at nodes  corresponding to small clusters.  Each such subtree is of size at most  
most $2^{j} + 1$ and is a binary tree whose nodes have labels from an alphabet of size at most $\sigma^2 + 5$. 
The total number of distinct labeled binary trees of size at most $x$  is given by 
$$
\sum_{i=1}^x (\sigma^2+5)^i \cdot C_{i-1} = \sum_{i=1}^x O((\sigma^2+5)^i \cdot4^i ) = O\left((4\sigma^2)^{x+1}\right), 
$$
where $C_{i}$ denotes the $i$th Catalan number. Since $x = 2^{j} + 1$, this number is bounded by 
$O((4\sigma^2)^{2^j+2}) = O(\sigma^4 \sqrt{n_T}) = O({n^{3/4}_T})$. In the last equality we assumed that $\sigma < n^{1/16}_T$. If $\sigma > n^{1/16}_T$ then the lemma trivially holds because $O(n_T/(\log_\sigma^{0.19}n_T)) = O(n_T)$.
We get that the total number of distinct subtrees of $\TT$  rooted at small clusters is therefore  also $O(n_T/\log_\sigma^{0.19} n_T)$.
This completes the proof of Theorem~\ref{thm:worstcasebounds}. 

\subsection{Comparison to Subtree Sharing}\label{sec:comparisonsubtreesharing}
We now prove Theorem~\ref{thm:comparison}. To do so we first show two useful properties of top trees and top DAGs. 

Let $T$ be a tree with top tree $\TT$. For any internal node $z$ in $T$, we say that the subtree $T(z)$ is \emph{represented} by a set of clusters $\{C_1, \ldots, C_\ell\}$ from $\TT$ if $T(z) = C_1 \cup \cdots \cup C_\ell$. Here $G = X_1 \cup \cdots \cup X_\ell$ denotes the graph with node set $V(G) = \cup_{i=1,\ldots, k}V(X_i)$ and  edge set $E(G) = \cup_{i=1,\ldots, k}E(X_i)$. Since each edge in $T$ is a cluster in $\TT$ we can always trivially represent $T(z)$ by at most $|T(z)| - 1$ clusters. We prove that there always exists a set of clusters, denoted $S_z$, of size $O(\log \nt)$ that represents $T(z)$. 

Let $z$ be any internal node in $T$ and let $z_1$ be its  leftmost child. Since $z$ is internal we have that $z$ is the top boundary node of the leaf cluster $L_z=(z,z_1)$ in $\TT$. Let $U$ be the smallest cluster in $\TT$ containing all nodes of $T(z)$. We have that $L_z$ is a descendant leaf of $U$ in $\TT$. Consider the path $P_z$ %of cluster 
in $\TT$ from $U$ to $L_z$. An \emph{off-path}  cluster of $P_z$ is a cluster $C$ that is not on $P_z$, but whose parent cluster is on $P_z$. We define 
$$
S_z = \{C \mid \text{$C$ is off-path  cluster of $P_z$ and the tree pattern $C$ is a subtree of $T(z)$}\} \cup \{L_z\} \;.
$$

Since the length of $P_z$ is $O(\log n_T)$ the number of clusters in $S_z$ is $O(\log n_T)$. We want to prove that $\cup_{C \in S_z} C = T(z)$. 
By definition of $S_z$ we have that all nodes in $\cup_{C \in S_z} C$ are in $T(z)$. For the other direction, we first prove the following lemma. Let $E(C)$ denote the set of edges in $T$ of a cluster $C$.
\begin{lemma}\label{lem:offpath}
Let $C$ be an off-path  cluster of $P_z$. Then either $E(C) \subseteq E(T(z))$ or $E(C) \cap E(T(z)) = \emptyset$.
\end{lemma}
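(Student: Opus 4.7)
The strategy is to case split on the location of the top boundary node of $C$, which I will denote $v_t$, relative to $T(z)$. The driving structural fact is that every cluster is a connected tree pattern of the form $T(x,x_s,x_r)\setminus F(u)$ whose top boundary is $x$, so $C$ is a connected subgraph of $T$ lying inside $T(v_t)$; in particular $E(C) \subseteq E(T(v_t))$.

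The easy case is $v_t \in T(z)$, i.e., $v_t = z$ or $v_t$ is a descendant of $z$. Then $T(v_t) \subseteq T(z)$, so $E(C) \subseteq E(T(z))$, which gives the first alternative of the lemma.

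The main case is $v_t \notin T(z)$, where we must show $E(C) \cap E(T(z)) = \emptyset$; this is where the bulk of the work lies. I would argue by contradiction: suppose some edge $(a,b) \in E(C) \cap E(T(z))$. Since $C$ is connected and contains both $v_t$ and $a$, the unique $T$-path from $v_t$ to $a$ is already contained in $C$. Because $T(z)$ is attached to the rest of $T$ only through the edge from $z$ to its parent, and $v_t$ lies outside $T(z)$ while $a$ lies inside, this path must pass through $z$; hence $z \in C$. The on-path sibling cluster $C''$ of $C$ in $\TT$ contains $L_z = (z, z_1)$ by definition of $P_z$, so $z \in C''$ as well. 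Sibling clusters in $\TT$ overlap on exactly one boundary node $v$, which forces $v = z$. Since $v_t \neq z$, the node $z$ must serve as the bottom boundary of $C$; then by the definition of cluster, $F(z)$ is excluded from $C$, so $C$ contains no strict descendant of $z$. But every edge of $T(z)$ has at least one endpoint strictly below $z$, contradicting $(a,b) \in E(C)$. I expect the main obstacle to be precisely this final closing of the contradiction, combining the single-shared-boundary property of sibling clusters in $\TT$ with the $F(z)$-exclusion forced at the bottom boundary; the preceding steps are direct consequences of the connectivity of tree patterns and the structural definition of a cluster.
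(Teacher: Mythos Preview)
Your proof is correct and takes a genuinely different route from the paper's.

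The paper first proves a general structural claim: \emph{any} cluster in $\TT$ containing edges from both $T(z)$ and $T\setminus T(z)$ must contain the specific edge $(z,z_1)$. It does this by descending inside such a cluster to the minimal mixed subcluster $C'$, which is forced to be a type~(a) or~(b) merge with shared boundary $z$; the merge rules then force the lower child of $C'$ to contain $(z,z_1)$. Since every cluster containing the leaf edge $(z,z_1)=L_z$ is an ancestor of $L_z$ in $\TT$, no off-path cluster of $P_z$ can be mixed.

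Your argument bypasses this intermediate fact: you case-split on the top boundary $v_t$ of the off-path cluster $C$ itself, and in the nontrivial case you exploit the single-vertex overlap between $C$ and its on-path sibling (which necessarily contains $z$) to pin $z$ as the bottom boundary of $C$, so that $F(z)$ is excluded from $C$ by the very definition of a cluster. This is shorter and more direct, since it avoids the descent to a minimal mixed cluster and the case analysis on merge types. The paper's version, on the other hand, yields the slightly stronger and reusable statement that mixed clusters always contain $(z,z_1)$, which makes the connection to the path $P_z$ especially transparent.
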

\begin{proof}
We will show that any cluster in $\TT$ containing edges from both $T(z)$ and $T\setminus T(z)$ contains both $(p(z),z)$ and $(z,z_1)$, where $z_1$ is the leftmost child of $z$ and $p(z)$ is the parent of $z$. Let $C$ be a cluster containing edges from both $T(z)$ and $T\setminus T(z)$. 
Consider the subtree $\TT(C)$ and let $C'$ be the smallest cluster in  $\TT(C)$ containing edges from both $T(z)$ and $T\setminus T(z)$. That is,  $C'$ is the cluster found by descending down from $C$ towards a child with both types of edges as long as such a child exists. Then $C'$ must be a merge of type (a) or (b), where the higher cluster $A$ only contains edges from $T\setminus T(z)$ and the bottom cluster, $B$, only contains edges from $T(z)$. Also, $z$ is the top boundary node of $B$ and the bottom boundary node of $A$. Clearly, $A$ contains the edge $(p(z),z)$, since all clusters are connected tree patterns. A merge of type (a) or (b) is only possible when $B$ contains all children of its top boundary node. Thus $B$ contains the edge $(z,z_1)$. It follows that $C'$ (and therefore $C$ since it is an ancestor of $C'$) contains both $(p(z),z)$ and $(z,z_1)$.

We have $L_z=(z,z_1)$ and therefore all clusters in $\TT$ containing $(z,z_1)$ lie on the path from $L_z$ to the root. The path $P_z$ is a subpath of this path, and thus no off-path clusters of $P$ can contain $(z,z_1)$. Therefore no off-path clusters of $P$ can contain edges from both $T(z)$ and $T\setminus T(z)$.\qed
\end{proof}

Any edge from $T(z)$ (except $(z,z_1)$) contained in a cluster on $P$ must be contained in an off-path cluster of $P$. Lemma~\ref{lem:offpath} therefore implies that $T(z) = \cup_{C \in S_z} C$ and the following corollary.

\begin{corollary}
Let $T$ be a tree with top tree $\TT$. For any internal node $z$ in $T$, the subtree $T(z)$ can be represented by a set of $O(\log \nt)$ clusters in $\TT$.
\end{corollary}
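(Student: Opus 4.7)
The natural candidate is the set $S_z$ already constructed in the preceding discussion: the leaf cluster $L_z = (z,z_1)$ together with every off-path cluster of the path $P_z$ (from the smallest cluster $U \in \TT$ containing $V(T(z))$ down to $L_z$) whose tree pattern is a subtree of $T(z)$. The plan is to verify two things: (a) $|S_z| = O(\log \nt)$, and (b) $\bigcup_{C \in S_z} C = T(z)$.

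For (a), I would invoke the corollary of Section~\ref{sec:toptreeconstruction}, which guarantees that $\TT$ is a binary tree of height $O(\log \nt)$. Since $P_z$ is a root-to-leaf path within $\TT(U)$, its length is $O(\log \nt)$, and each node of $P_z$ has at most one off-path child (because $\TT$ is binary), so $|S_z| \le |P_z| + 1 = O(\log \nt)$.

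For (b), the inclusion $\bigcup_{C \in S_z} C \subseteq T(z)$ is immediate: $L_z \subseteq T(z)$ since $z$ is internal, and every other cluster in $S_z$ is selected precisely because it is a subtree of $T(z)$. The reverse inclusion is where the real content lies. I would argue as follows: let $e$ be any edge of $T(z)$. Then $e$ corresponds to a leaf cluster $\ell_e$ of $\TT$, and $\ell_e$ is a descendant of $U$ by minimality of $U$. If $e = (z,z_1)$ then $e \in L_z$; otherwise, walk up from $\ell_e$ in $\TT$ until the first ancestor that lies on $P_z$, and let $C$ be the off-path child of that ancestor that contains $\ell_e$. Then $C$ is an off-path cluster of $P_z$ containing an edge of $T(z)$, so by Lemma~\ref{lem:offpath} we have $E(C) \subseteq E(T(z))$; hence $C$ is a subtree of $T(z)$, $C \in S_z$, and $e \in C$.

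The main obstacle, already handled by Lemma~\ref{lem:offpath}, is ruling out an off-path cluster that straddles the boundary of $T(z)$: if such a cluster existed then collecting it would pull in stray edges from outside $T(z)$, while omitting it would leave some edge of $T(z)$ uncovered. Lemma~\ref{lem:offpath} says this dichotomy cannot arise for off-path clusters of $P_z$, which is exactly what makes the simple ``hang off the root-to-leaf path'' decomposition work. With this in hand, the corollary follows directly from the height bound on $\TT$.
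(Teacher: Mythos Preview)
Your proposal is correct and follows essentially the same approach as the paper: both use the set $S_z$ built from off-path clusters along $P_z$, bound its size via the $O(\log \nt)$ height of $\TT$, and obtain the reverse inclusion $T(z)\subseteq\bigcup_{C\in S_z}C$ from Lemma~\ref{lem:offpath}. Your bottom-up ``walk from $\ell_e$ until hitting $P_z$'' argument is just a more explicit phrasing of the paper's one-line observation that any edge of $T(z)$ other than $(z,z_1)$ must lie in some off-path cluster of $P_z$.
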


Next we prove that our bottom-up top tree construction guarantees that two identical subtrees $T(z), T(z')$ are represented by two {\em identical} sets of clusters $S_z, S_{z'}$. Two sets of clusters are identical (denoted $S_z = S_{z'}$) if there is a 1-1 correspondence between the clusters in $S_z$ and $S_{z'}$, such that two clusters mapped to each other are identical tree patterns in $T$ (have the same structure and labels).

%when  $C\in S_z$ iff $C'\in S_{z'}$ such that $C$ and $C'$ are clusters corresponding to tree patterns in $T$ that have the same structure and  labels. 

\begin{lemma}\label{lem:top-tree-ident}
Let $T$ be a tree with top tree $\TT$. Let $T(z)$ and $T(z')$ be identical subtrees in $T$. % and let $S_z$ and $S_{z'}$ be the corresponding representing set of clusters in $\TT$. 
Then, $S_z = S_{z'}$. 
\end{lemma}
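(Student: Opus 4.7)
The plan is to prove the lemma by establishing the following stronger correspondence: the sub-forest of $\TT$ spanned by clusters $C$ with $E(C) \subseteq E(T(z))$ is $\phi$-isomorphic to the analogous sub-forest for $T(z')$, where $\phi\colon T(z)\to T(z')$ denotes the isomorphism of tree patterns. Once this is shown, $S_z = S_{z'}$ is immediate: by Lemma~\ref{lem:offpath} the off-path clusters of $P_z$ contained in $T(z)$ are exactly those off-path clusters $C$ with $E(C) \subseteq E(T(z))$, and under the correspondence these are in bijection with the off-path inside clusters of $P_{z'}$, while $L_z$ corresponds to $L_{z'}$.

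To establish the correspondence, the plan is to induct on the iteration $i$ of the greedy construction, maintaining the invariant that as long as $z \in \Taux^i$ and $z' \in \Taux^i$ the subgraph of $\Taux^i$ rooted at $z$ (induced by the surviving vertices of $V(T(z))$) is $\phi$-isomorphic to the analogous subgraph rooted at $z'$, with corresponding edges representing $\phi$-isomorphic clusters. In the inductive step, a horizontal merge at a node $v \in V(T(z))$ involves only $\Taux^i$-children of $v$, and all such children lie in $V(T(z))$ because the $T$-descendants of $v$ do, so the merge is entirely a function of the inside subgraph and mirrors one at $\phi(v)$ in $T(z')$. A horizontal merge at a node outside $V(T(z))$ may involve the boundary edge whose lower endpoint is $z$, but since this edge does not belong to the inside subgraph, that subgraph is unchanged. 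Vertical merges along paths contained in $V(T(z))$ are handled directly by the inductive hypothesis. The delicate case is a vertical merge along a path $v_1,\ldots,v_p$ with $v_j = z$ for some $j \geq 2$: here $j$ equals one plus the length of the maximal single-child chain of the inside subgraph descending from $z$, so $j$ agrees for $z$ and $z'$, and the bottom-up pairing rule $\{(v_1,v_2),(v_2,v_3)\},\{(v_3,v_4),(v_4,v_5)\},\ldots$ ensures that the pairs fitting entirely inside $V(T(z))$ depend only on $j$.

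The main obstacle is that $z$ and $z'$ may be eliminated in different iterations, because the boundary-crossing pair that absorbs $z$ requires $z$ to be an only $\Taux^i$-child of $p(z)$, a condition depending on outside structure. The resolution is that $z$ can be absorbed only once the inside structure has coalesced into a single cluster $A$ attached at $z$; at that moment $A$ becomes a maximal inside cluster, and the inside construction arrives at this state through the same sequence of inside merges for both $z$ and $z'$, so $\phi(A)$ likewise becomes a maximal inside cluster for $z'$, possibly in a later iteration. Applying this reasoning iteratively to each inside cluster as it is absorbed, the sub-forest of inside clusters in $\TT$ is shown to be determined by $T(z)$ alone, yielding $S_z = S_{z'}$.
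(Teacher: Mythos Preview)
Your inductive framework is in the same spirit as the paper's, but the stronger invariant you aim for --- that the entire sub-forest of $\TT$ spanned by clusters $C$ with $E(C)\subseteq E(T(z))$ is $\phi$-isomorphic to that for $T(z')$ --- is false, and the error is precisely in your ``resolution'' of the main obstacle. You overlook the odd-$p$ special case in Step~2 of the construction: the extra pair $\{(v_{p-2},v_{p-1}),(v_{p-1},v_p)\}$ can absorb $z$ (when $z=v_{p-1}$) based on whether $(z,p(z))$ was merged in Step~1, a condition that depends on the outside structure. Hence your claim that ``$z$ can be absorbed only once the inside structure has coalesced into a single cluster attached at $z$'' fails. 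Concretely, let $T(z)$ be the path $z\text{--}a\text{--}b\text{--}c$ with $c$ a leaf, and let $p(z)$ have exactly two children, $z$ and a non-leaf $w$. In iteration~1 the vertical path is $c,b,a,z,p(z)$ with $p=5$; the regular pair merges only $(c,b)$--$(b,a)$, and since $(z,p(z))$ was not merged in Step~1, the extra pair merges $(a,z)$--$(z,p(z))$, absorbing $z$ while the inside cluster $\{c,b,a\}$ remains separate. Now let $p(z')$ instead have children $z'$ and a leaf $\ell$; then $(z',p(z'))$ \emph{is} merged in Step~1, the extra pair is suppressed, and $z'$ survives. In iteration~2 the inside merge $(c',a')$--$(a',z')$ creates the inside cluster $\{c',b',a',z'\}$, which has no counterpart for $z$. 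The inside sub-forests are non-isomorphic (two maximal inside clusters versus one), yet one checks directly that $S_z=S_{z'}=\{(a,z),\{c,b,a\}\}$ up to $\phi$.

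The paper sidesteps this by tracking a weaker object: only the merges of the cluster $L_z$ in $\Taux$ containing the edge $(z,z_1)$. A cluster is added to $S_z$ exactly when $L_z$ merges with an edge belonging to $T(z)$, and the paper argues that these specific merges coincide for $z$ and $z'$ in each iteration --- even though $L_z$ and $L_{z'}$ themselves may have undergone different outside merges along the way. Your argument would be repaired by restricting the invariant to this $L_z$-centric statement rather than attempting to synchronize the full inside sub-forests.
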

\begin{proof} 
%We will show that in each iteration in the construction of the top tree the clusters from $S_z$ and $S_{z'}$ added to $\TT$ in this iteration are identical. 
%Clearly this is true in the first round, since the only clusters created in this round are leaf clusters???

Consider the tree $\Taux$ at some iteration of the construction of the top tree. We will say that an edge $e$ in $ \Taux$ \emph{belongs to} $T(z)$ (resp. $T(z')$) if the cluster corresponding to $e$ only contains edges from $T(z)$ (resp. $T(z')$) in the original tree. Let $L_z$ be the cluster in $\Taux$  containing  the edge $L=(z,z_1)$, where $z_1$ is the leftmost child of $z$. Define $L_{z'}$ similarly.

We will say that a cluster $C\neq L_z$ is added to $S_z$ in the iteration where its parent on $P_z$ is created, and we say that $L_z$ is added to $S_z$ right before the first round. Similarly for clusters in $S_z$.

We will show that new clusters only are added to $S_z$ (resp. $S_{z'}$)  if $L_z$ (resp. $L_{z'}$) is merged with an edge belonging to $T(z)$ (resp. $T(z')$), and that these merges are identical for the two subtrees in each iteration. 

Recall that $U$ is the smallest cluster in $\TT$ containing all nodes of $T(z)$ and that  $P$ is the path of clusters in $\TT$ from $U$ to $L$. By definition, all clusters on the path $P$ contain $L$. This implies that new off-path clusters are only  constructed when $L_z$ (resp.\ $L_{z'}$) is merged. Merges of identical edges belonging to $T(z)$ and $T(z')$ are the same in the two subtrees of $\Taux$, since we merge first horizontally, and then vertically bottom-up.  
By the same argument if $L_z$ is merged with an edge belonging to $T(z)$ then $L_{z'}$ is merged with the corresponding edge from $T(z')$.
For a merge with an edge belonging  to $T(z)$  (resp. $T(z')$) and an edge not belonging to $T(z)$ (resp. $T(z')$), one of the edges must be $ L_z$ (resp. $L_{z'}$). If $L_z$ is merged in this iteration, but $L_{z'}$ is not, then $L_z$ is merged with an edge not belonging to $T(z)$ (and vice versa). 
Thus, after the iteration all edges belonging to $T(z)$ in  $\Taux$ are identical to the edges belonging to $T(z')$ in $\Taux$.  

New off-path clusters are only constructed when $L_z$ (resp. $L_{z'}$) are merged. It only adds new clusters to $S_z$ (resp. $S_{z'}$) if it is a merge with an edge belonging to $T(z)$ (resp. $T(z')$). Since these merges are identical for the two subtrees in each iteration, and $L_z$ is merged with an edge belonging to $T(z)$ iff  $L_{z'}$ is merged with the corresponding edge belonging to $T(z')$, we have $S_z = S_{z'}$.\qed
\end{proof}

\begin{theorem}\label{theo:ub}
For any tree $T$, $\ntd = O(\log \nt) \cdot \nd$.
\end{theorem}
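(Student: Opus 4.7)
The plan is to prove $\ntd \leq O(\log \nt) \cdot \nd$ by associating each distinct cluster tree pattern appearing in $\TT$ with a pair drawn from a set of size $O(\log \nt) \cdot \nd$. First I would observe that $\ntd$ equals the number of distinct labeled subtrees of $\TT$: by the deterministic construction of $\TT$ and its labelling (internal nodes by merge type, leaves by endpoint label pairs), two subtrees $\TT(c)$ and $\TT(c')$ are identical iff the corresponding clusters have identical tree patterns in $T$. So the task reduces to counting distinct cluster patterns.

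Next I would pick a set $R$ of representatives, one internal node $z$ of $T$ per distinct rooted subtree $T(z)$, so $|R| \leq \nd$. For each $z \in R$, the earlier corollary gives $|S_z| = O(\log \nt)$, and Lemma~\ref{lem:top-tree-ident} tells us that the tree patterns in $S_z$ are determined solely by the class of $T(z)$. Inspecting the proof of Lemma~\ref{lem:top-tree-ident} more carefully shows something stronger: the top tree construction restricted to $T(z)$ is identical across representatives, so the clusters lying on the path $P_z$ and contained in $T(z)$ are also determined by $T(z)$. Augmenting $S_z$ with these on-path clusters yields a set $S'_z$ of size $O(\log \nt)$ that is still an invariant of the class of $T(z)$. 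Summed over $R$, the total number of distinct cluster patterns appearing in $\bigcup_{z \in R} S'_z$ is at most $\sum_{z \in R} |S'_z| = O(\log \nt) \cdot \nd$.

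The main step is then to show that every cluster $C$ arising in $\TT$ has its tree pattern represented in $\bigcup_{z \in R} S'_z$. Let $v$ be the top boundary of $C$ and consider $z \in R$ representing $T(v)$. If $C$'s node in $\TT$ is off-path of $P_v$, then $C \in S_v \subseteq S'_v$; if $C$'s node lies on $P_v$, then since $C \subseteq T(v)$ its pattern is among the on-path augmentation of $S_v$. The main obstacle is the residual case: $C$ is a strict $\TT$-descendant of an off-path cluster $C_1 \in S_v$ (so neither on $P_v$ nor off-path of $P_v$). In this case $C_1$ itself has top boundary $v$ and its pattern is already captured in $S'_v$; by an extension of the Lemma~\ref{lem:top-tree-ident} argument, the entire subtree $\TT(c_1)$ is determined by the tree pattern of $C_1$ alone, so $C$'s pattern is a canonical sub-pattern of $C_1$'s, allowing us to recurse. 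Because $\TT$ has height $O(\log \nt)$, the recursion terminates after $O(\log \nt)$ levels, and a careful accounting shows that no additional representatives beyond $R$ are introduced. The hard part will be formalizing this recursive reduction and verifying that every pattern, including singleton leaf clusters whose edge labels differ from the distinguished $L_z$ edges, is genuinely captured within the union $\bigcup_{z \in R} S'_z$ rather than escaping the counting.
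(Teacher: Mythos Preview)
Your approach differs substantially from the paper's, and as written it has two genuine gaps.

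\textbf{Gap 1: the ``iff'' claim.} You assert that $\TT(c)$ and $\TT(c')$ are identical iff the corresponding clusters have identical tree patterns in $T$. The ``only if'' direction is immediate from the labelling, but you actually need the ``if'' direction (same pattern $\Rightarrow$ same $\TT$-subtree) to reduce to counting cluster patterns, and that direction is neither obvious nor established anywhere in the paper. Lemma~\ref{lem:top-tree-ident} proves a statement of this flavour only for full rooted subtrees $T(z)$, and its proof crucially uses that the only place the ``outside world'' can influence merges inside $T(z)$ is at the single node $z$. An arbitrary cluster $C=T(v,v_s,v_r)\setminus F(u)$ has \emph{two} boundary nodes, and the merge behaviour at both depends on edges outside $C$: the horizontal pairing of $v$'s children at the top depends on the index $s$ among all of $v$'s children, and the vertical merge pattern at the bottom depends on the structure below $u$. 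So two clusters with the same pattern can in principle be assembled differently, giving different $\TT$-subtrees. Without this claim your reduction to counting patterns does not bound $\ntd$ from above.

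\textbf{Gap 2: the residual-case recursion.} Even granting the reduction, your accounting does not close. The sets $S'_z$ together contain $O(\nd\log\nt)$ clusters, but you must show every distinct pattern \emph{equals} a pattern in $\bigcup_z S'_z$, not merely that every cluster sits below some element of that union. In the residual case you correctly argue that $C_1$ has top boundary $v$, but then ``recurse'': on what parameter? Both $C$ and $C_1$ have the same top boundary $v$, so re-examining $T(v)$ makes no progress, and the descendants of $C_1$ in $\TT$ can contribute many new patterns not listed in any $S'_z$. The sentence ``a careful accounting shows that no additional representatives beyond $R$ are introduced'' is exactly where the real work would be, and it is absent.

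The paper sidesteps both issues with a different mechanism. It colours edges of $T$ red (lying in a shared subtree) or blue (unique), and classifies each cluster in $\TT$ as red, blue, or purple. Red and blue clusters are bounded not by counting patterns but by Lemma~\ref{lem:subtresize} ($|\TT(c)|=O(|C|)$), together with Lemma~\ref{lem:top-tree-ident} to collapse all copies of each shared subtree. Purple clusters are bounded by observing that the minimal purple clusters are at most $b_D$ in number and each purple cluster lies on a root-to-minimal path of length $O(\log\nt)$. The use of Lemma~\ref{lem:subtresize} is the key ingredient your plan is missing: it controls \emph{all} descendants of a representative cluster at once, rather than trying to enumerate their patterns individually.
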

\begin{proof}
Denote an edge in the DAG as shared if it is in a shared subtree of $T$. We denote the edges in the DAG $\D$ that are shared as \emph{red} edges, and the edges that are not shared as \emph{blue}. Let $r_D$ and $b_D$ be the number of red  and blue edges in the DAG $\D$, respectively.

A cluster in the top tree $\TT$ is \emph{red} if it only contains red edges from $\D$, \emph{blue} if it only contains blue edges from $\D$, and \emph{purple} if it contains both. Since clusters are connected subtrees we have the property that if cluster $C$ is red (resp.\ blue), then all clusters in the subtree $\TT(C)$ are red (resp.\ blue). 
Let $r$, $b$, and $p$ be the number of red, blue, and purple clusters in the top DAG $\TD$, respectively.

%Since $\TT$ is a binary tree, where all internal nodes have $2$ children and all leaves are either red or blue, we have $p \leq r+ b$. It is thus enough to bound the number of red and blue clusters.

First we bound the number of red clusters in the top DAG $\TD$. Consider a shared subtree $T(z)$ from the DAG compression. $T(z)$ is represented by at most $O(\log \nt)$ clusters in $\TT$, and all these contain only edges from $T(z)$. Thus all the clusters in $S_z$ are red.
It follows from Lemma~\ref{lem:top-tree-ident} that all the clusters representing $T(z)$ (and their subtrees in $\TT$) are identical for all copies of  $T(z)$. Therefore each of these will appear only once in the top DAG $\TD$. 

The clusters representing $T(z)$ are edge-disjoint connected subtrees of $T(z)$. 
It follows from Lemma~\ref{lem:subtresize} that $|\TT(C)| = O(|C|)$ for each cluster in $S_z$. 
%The leaf clusters in $\TT(C)$ where $C\in S_z$ corresponds to edges in $T(z)$.
%And each edge in $T(z)$ is represented as a leaf cluster in exactly one subtree $\TT(C)|$ where $C \in S_z$. 
%Since $\TT$ is a binary tree we have that $|\TT(C)| = O(|C|)$ for each cluster in $S_z$. 
Therefore the total size of the subtrees of the clusters representing $T(z)$ in $\TT$ is $O(|T(z)|)$.  As argued above these are only represented once in the top DAG $\TD$. Thus the number of red clusters $r = O(r_D)$.

To bound the number of blue clusters in the top DAG, we first note that the blue clusters form rooted subtrees in the top tree.  Let $C$ be the root of such a blue subtree in $\TT$. Then $C$ is a connected component of blue edges in $T$. It follows from Lemma~\ref{lem:subtresize} that 
$|\TT(C)| = O(|C|)$. Thus the number of blue clusters $b = O(b_D)$.

It remains to bound the number $p$ of purple clusters (clusters containing both shared and non shared edges). The number of purple clusters in the top DAG $\TD$ is bounded by the number of purple clusters in the top tree $\TT$. For any purple cluster we have that all its ancestors in $\TT$ are also purple. Consider the set $P$ of purple clusters in $\TT$ that have no purple descendants. Each of the clusters in $P$ have a blue leaf cluster in its subtree. These blue leaf clusters are all distinct, and since the corresponding edges are not shared in the DAG $\D$, we have $|P| \leq b_D$. Each cluster in $P$ is the endpoint of a purple path from the root (and the union of these paths contains all purple clusters in \TT). Since the height of $\TT$ is $O(\log \nt)$ the number of nodes on each path is at most $O(\log n_T)$. It follows that the number of purple clusters in $\TT$ (and thus also in $\TD$) is at most $|P|\cdot O(\log \nt) = O(b_D \log \nt)$.

The number of edges in the $\TD$ is thus 
$b+ r + p  = O(b_D + r_D  + b_D \log \nt) = O(\nd \log \nt)$. \qed
\end{proof}

\begin{lemma}\label{lem:proofbypicture}
There exist trees $T$, such that $\nd = \Omega (\nt/ \log \nt) \cdot \ntd$.
\end{lemma}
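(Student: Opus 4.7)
The plan is to reuse the path-with-uniform-labels example already flagged in the introduction: let $T$ be a path on $n$ nodes, each carrying the same label $a\in\Sigma$, so $\nt = \Theta(n)$. I would establish $\nd = \Theta(\nt)$ and $\ntd = O(\log \nt)$, whence $\nd = \Omega(\nt/\log \nt)\cdot \ntd$ follows immediately.

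The DAG bound is the easy direction. Every rooted subtree $T(v)$ is itself a path whose length equals the number of descendants of $v$, and paths of different lengths are distinct as labeled ordered trees. Hence the $n$ rooted subtrees of $T$ are pairwise distinct and the minimal DAG has $\Theta(n)$ vertices and edges, giving $\nd = \Theta(\nt)$.

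For the top DAG I would trace the greedy construction of Section~\ref{sec:toptreeconstruction}. Step~1 is vacuous on a path (every internal node has a single child), so each round consists solely of the vertical pairing of Step~2 applied along the whole path. By induction on the round index $i$, after round $i$ the surviving clusters are paths of $2^i$ edges, all labeled $a$, and at each round they come in only $O(1)$ ``flavours'': one flavour for the unique cluster that absorbs the bottom leaf of $T$ (merged by a type (c)/(d)/(e) rule because one of its predecessors has no bottom boundary), and one generic flavour for the interior clusters (merged by a type (a)/(b) rule). Consequently, at each level of $\TT$ the rooted subtrees are, including their merge-type labels, one of only $O(1)$ distinct labeled binary trees, so the minimal DAG of $\TT$ retains $O(1)$ vertices per level. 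Since the construction terminates after $O(\log \nt)$ rounds, this yields $\ntd = O(\log \nt)$.

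Combining the two estimates gives $\nd/\ntd = \Omega(\nt/\log \nt)$, as required. The main place where the sketch could subtly fail is the inductive claim that only $O(1)$ cluster-shapes arise per round; verifying this reduces to a direct inspection of the odd/even special cases of Step~2 at the two ends of the path, which is tedious but mechanical. Once that inspection is done, the quantitative comparison is essentially the arithmetic already anticipated in the introduction.
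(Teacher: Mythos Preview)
Your proposal is correct and follows exactly the same example as the paper's (very terse) proof: a uniformly labeled path, for which $\nd=\Theta(\nt)$ because all rooted subtrees are paths of distinct lengths, while $\ntd=O(\log \nt)$ because the greedy top tree has $O(\log \nt)$ levels with only a constant number of distinct cluster-subtrees per level.

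One small slip worth fixing: on a pure path every merge is \emph{vertical}, so the special cluster absorbing the leaf of $T$ is formed by a type~(b) merge, not type~(c)/(d)/(e) (those are horizontal sibling merges and never occur here). This does not affect your counting argument---you still get exactly two cluster ``flavours'' per level (the generic type~(a) interior cluster and the single type~(b) bottom cluster), plus the odd/even boundary effect at the top---so the $O(1)$-per-level claim and hence $\ntd=O(\log \nt)$ go through as you describe.
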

\begin{proof}
%Both 
Caterpillars and paths (where all nodes  have  identical labels) have 
$\ntd = O(\log \nt)$, whereas $\nd = \nt$ (see Figure~\ref{fig:compare}). \qed
\end{proof}
\begin{figure}[tb]
   \centering
   \includegraphics[scale=0.38]{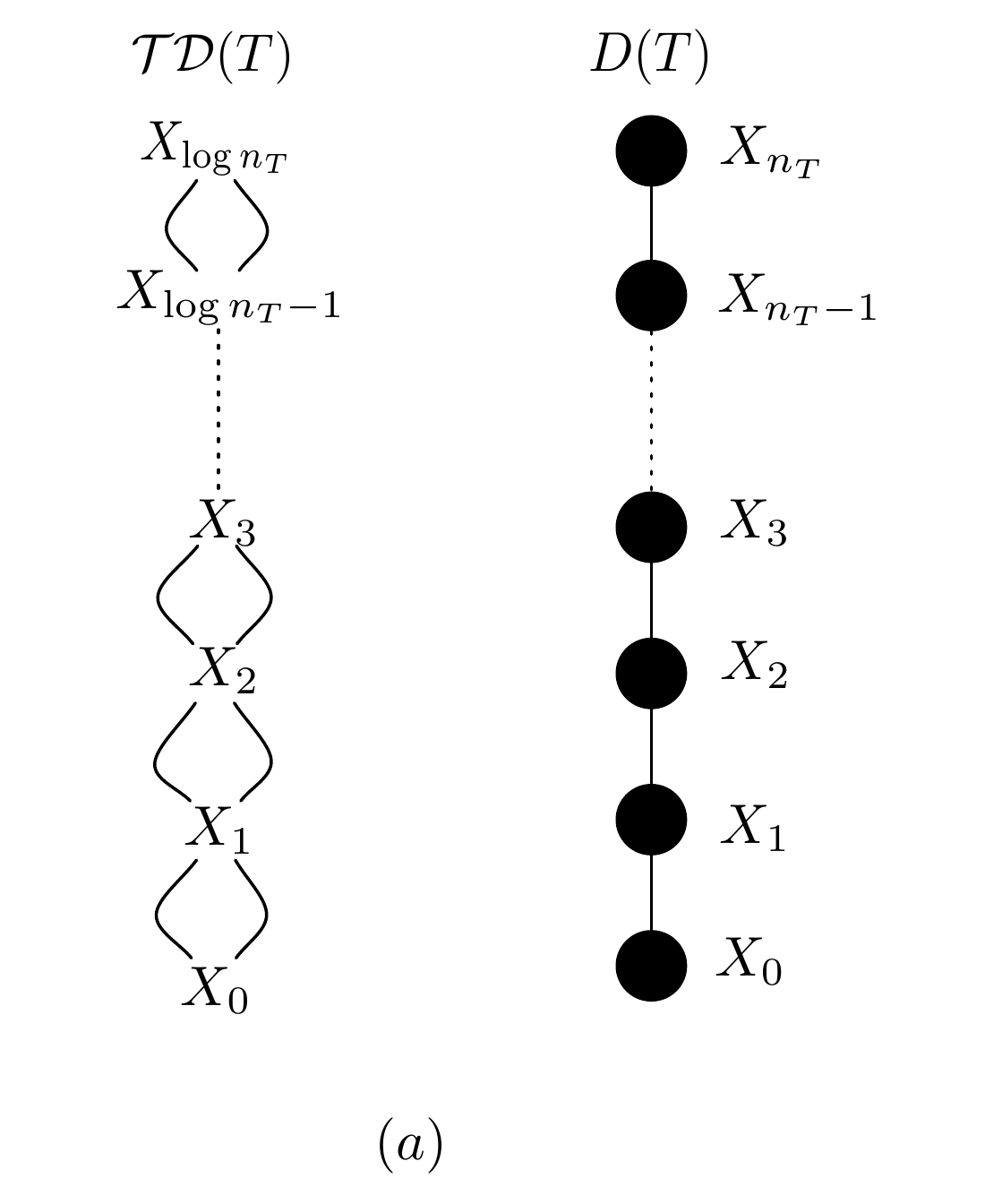}\hspace{1cm}
   \includegraphics[scale=0.38]{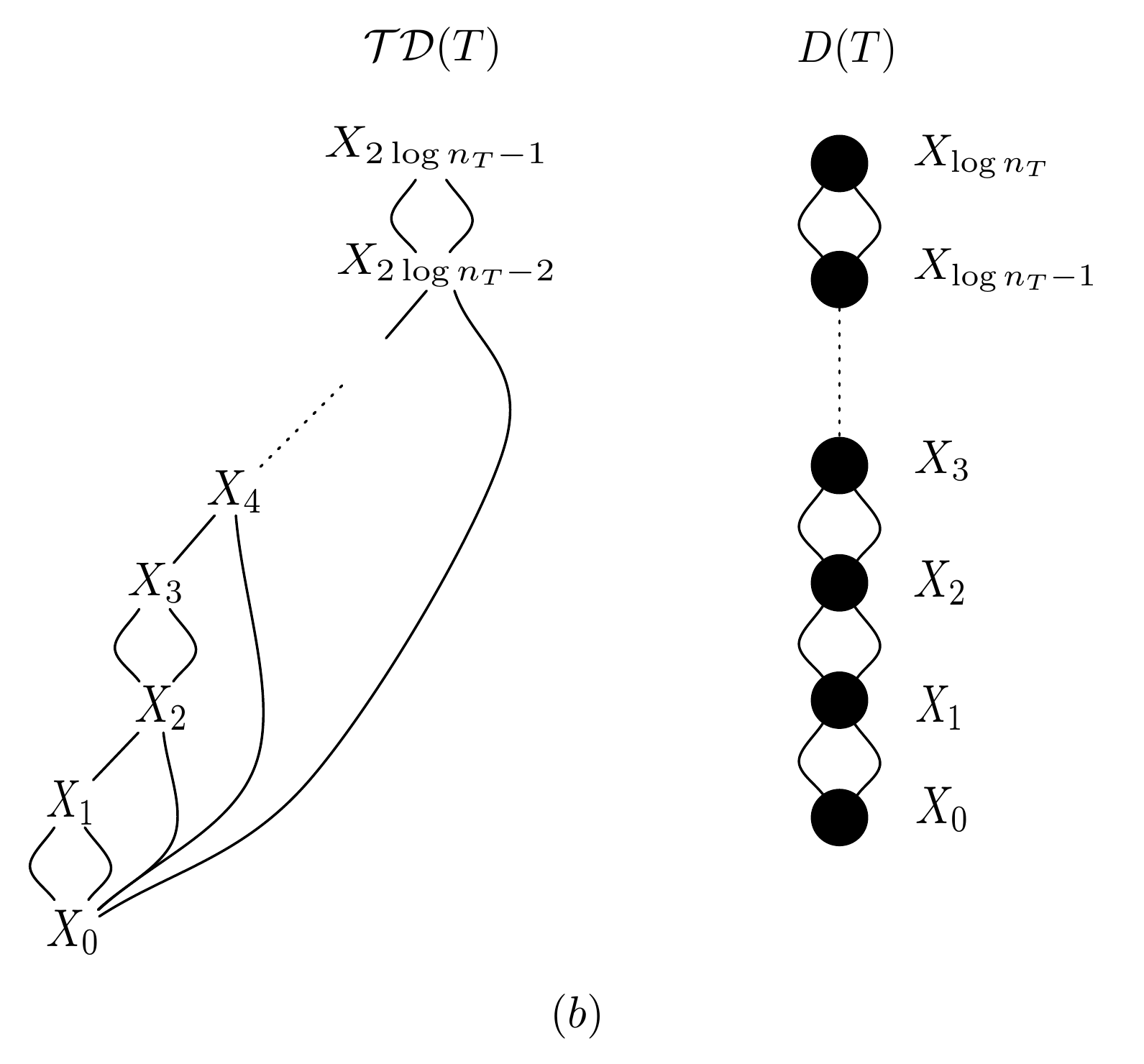}
   \caption{A  Top DAG $\TD$ and a DAG D($T$) of %(a) a star, (b) a path, and (c) a complete binary tree. On a star and a path ((a) and (b)) the size of $\TD$ is $O(\log \nt)$ whereas the size of D($T$) is $O(\nt)$ (these are two examples that prove Lemma~\ref{lem:proofbypicture}). On a complete binary tree (c) both $\TD$ and D($T$) are of size $O(\log \nt)$.}
   (a) a path  and (b) a complete binary tree. All labels are identical. On a path (and also a caterpillar and a star) the size of $\TD$ is $O(\log \nt)$ whereas the size of D($T$) is $O(\nt)$. %(these are two examples that prove Lemma~\ref{lem:proofbypicture}). 
   On a complete binary tree (b) both $\TD$ and D($T$) are of size $O(\log \nt)$.}

  \label{fig:compare}
\end{figure}

\section{Supporting Navigational Queries}\label{section4}
In this section we prove Theorem~\ref{thm:navigation}. Let $T$ be a tree with top DAG $\TD$. To uniquely identify nodes of $T$ we refer to them by their preorder numbers. For a node of $T$ with preorder number $x$ we want to support the following queries. 

\begin{description}
\item[$\Access(x)$:] Return the label associated with node $x$.
\item[$\Decompress(x)$:] Return the tree $T(x)$.
\item[$\Parent(x)$:] Return the parent of node $x$.
\item[$\Depth(x)$:] Return the depth of node $x$.
\item[$\Height(x)$:] Return the height of node $x$.
\item[$\Size(x)$:] Return the number of nodes in $T(x)$.
\item[$\FirstChild(x)$:] Return the first child of $x$.
\item[$\NextSibling(x)$:] Return the sibling immediately to the right of $x$.
\item[$\LevelAncestor(x,i)$:] Return the ancestor of $x$ whose distance from $x$ is $i$.
\item[$\NCA(x,y)$:] Return the nearest common ancestor of the nodes $x$ and $y$.
\end{description}

\subsection{The Data Structure}
In order to enable the above queries, we augment the top DAG $\TD$ of $T$ with some additional information. Consider a cluster $C$ in $\TD$. Recall that if $C$ is a leaf in $\TD$ then $C$ is a single edge in $T$ and $C$ stores the labels of this edge's endpoints. Otherwise, $C$ is a cluster of $T$ obtained by merging two clusters: the cluster $A$  corresponding to $C$'s left child and the cluster $B$ corresponding to $C$'s right child. Consider a preorder traversal of $C$. Let $\ell(B)$ denote the first node visited in this traversal that is also a node in $B$. Let $r(B)$ (resp. $r(A)$) denote the last node visited that is also a node in $B$ (resp. in $A$). 
We augment each cluster $C$ with:

\begin{itemize}
\item[$\bullet$] The integers $r(A)$, $\ell(B)$, and $r(B)$. 
\item[$\bullet$] The type of merge that was applied to $A$ and $B$ to form $C$. If $C$ is a leaf cluster then the labels of its corresponding edge's endpoints in $T$.
\item[$\bullet$] The height and size of $C$ (i.e., of the tree pattern $C$ in $T$).
\item[$\bullet$] The distance from the top boundary node of $C$ to the top boundary nodes of $A$ and $B$.    
\end{itemize} 
Since we %only 
use constant space for each cluster of $\TD$, the total space remains~$O(\ntd)$. 

\paragraph{Local preorder numbers}
All of our queries are based on traversals of the augmented top DAG $\TD$. During the traversal we identify nodes by computing preorder numbers  local to the cluster that we are currently visiting. Specifically, let $u$ be a node in the cluster $C$. Define the \emph{local preorder number of $u$}, denoted $u_C$, to be the position of $u$ in a preorder traversal of $C$. The following lemma states that in $O(1)$ time we can compute $u_A$ and $u_B$ from $u_C$ and vise versa.  %The proof is in the appendix.

\begin{lemma}\label{lem:localpreorder}
Let $c$ be an internal node of $\TD$ that corresponds to the cluster $C$ of $T$ obtained by merging the cluster  $A$ (corresponding to $c$'s left child) and the cluster  $B$ (corresponding to $c$'s right child).
For any node $u$ in $C$, given $u_C$ we can tell in constant time if $u$ is in $A$ (and obtain $u_A$) in $B$ (and obtain $u_B$) or in both. Similarly, if $u$ is in $A$ or in $B$ we can obtain $u_C$ in constant time from $u_A$ or $u_B$.
\end{lemma}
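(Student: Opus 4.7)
The plan is to prove the lemma by case analysis on the type of merge (one of the five merges in Figure~\ref{fig:merge}) that produced $C$ from $A$ and $B$. In each case I would describe the preorder traversal of $C$ as a concatenation of a constant number of contiguous intervals, where each interval is a contiguous range of the preorder of $A$ or of $B$, and the endpoints of these intervals are given directly by the stored values $r(A)$, $\ell(B)$, and $r(B)$. Once such a description is in hand, testing which interval contains a given $u_C$ is a constant number of comparisons, and converting $u_C$ to $u_A$ or to $u_B$ (or deciding that $u$ is a shared boundary node that lies in both) is a single addition or subtraction. The reverse direction, from $u_A$ or $u_B$ to $u_C$, is completely symmetric.

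For a horizontal merge (types (c), (d), (e)), $A$ and $B$ share the top boundary node $v$, which occupies position $1$ in the preorders of $A$, of $B$, and of $C$. The preorder of $C$ then lists the remaining nodes of $A$ at positions $2,\ldots,r(A)$ followed by the remaining nodes of $B$ at positions $\ell(B)=r(A)+1,\ldots,r(B)$. Hence $u_C=1$ identifies $u$ as the shared boundary and so lying in both; $2\le u_C \le r(A)$ gives $u \in A$ with $u_A = u_C$; and $\ell(B) \le u_C \le r(B)$ gives $u \in B$ with $u_B = u_C - \ell(B) + 2$.

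For a vertical merge (types (a), (b)), $A$ is the upper cluster and $B$ the lower one, joined at the node $v$ that is $A$'s bottom boundary and $B$'s top boundary. Because $v$ is a leaf in $A$, the preorder of $C$ splits into three blocks: positions $1,\ldots,\ell(B)$ of $C$ coincide with positions $1,\ldots,\ell(B)$ of $A$ and end at $v$; positions $\ell(B)+1,\ldots,r(B)$ of $C$ coincide with positions $2,\ldots,r(B)-\ell(B)+1$ of $B$ (the descendants of $v$ contributed by $B$); and positions $r(B)+1,\ldots,r(A)$ of $C$ coincide with positions $\ell(B)+1,\ldots,|A|$ of $A$ (the preorder continuation inside $A$ past $v$). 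Testing $u_C$ against these three ranges and adjusting by the stored offset yields $u_A$ or $u_B$ in constant time, with $u_C=\ell(B)$ being the unique shared-boundary case.

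I expect the main obstacle to be careful bookkeeping of degenerate sub-cases rather than any conceptual difficulty: for instance, when $v$ is the last preorder node of $A$ the third block above is empty and $r(A)=r(B)$; in merges of type (d) and (e) the cluster without a bottom boundary is forced to be a specific one of $A$ or $B$, which slightly affects the symmetric reverse mapping; and in every case one must check explicitly that the ``$u$ is in both $A$ and $B$'' test is exactly $u_C=1$ in the horizontal sub-cases and exactly $u_C=\ell(B)$ in the vertical sub-cases.
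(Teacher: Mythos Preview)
Your proposal is correct and follows essentially the same approach as the paper: the paper groups the five merges into the same two cases (vertical types (a)/(b) and horizontal types (c)/(d)/(e)), writes down the identical three-block and two-block decompositions of the preorder of $C$, and gives the same offset formulas you derived (your $u_B = u_C - \ell(B) + 2$ for horizontal merges becomes the paper's $u_B = u_C - r(A) + 1$ once $\ell(B)=r(A)+1$ is substituted). The degenerate sub-cases you worry about do not require separate treatment---the interval formulas handle them uniformly.
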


\begin{proof}
If $C$ is a merge of $A$ and $B$ of type (a) or (b) then
\begin{itemize}
\item $u_C = 1$ iff $u$ is the top boundary node of $A$ and $C$ and $u_A = 1$.
\item $u_C \in [2, \ell(B) - 1]$ iff $u$ is an internal node of $A$ and $u_A < l(B)$.
In this case $u_A = u_C$.
\item $u_C = \ell(B)$ iff $u$ is the shared boundary node of $A$ and $B$,  $u_A = \ell(B)$, and $u_B = 1$.
\item $u_C \in [\ell(B) + 1, r(B)]$ iff $u$ is an internal node in $B$. In this case $u_B = u_C - \ell(B) + 1$.
\item $u_C \in [r(B) + 1, r(A)]$ iff $u$ is an internal node in $A$ and $u_A > l(B)$. In this case $u_A = u_C - r(B) + \ell(B)$.
\end{itemize}

\noindent Otherwise, if $C$  is a merge of $A$ and $B$ of type (c), (d), or (e) then
\begin{itemize}
\item $u_C = 1$ iff $u$ is the shared  boundary node of $A$, $B$, and $C$ and $u_A = u_B = 1$.
\item $u_C \in [2, r(A)]$ iff $u$ is an internal node in $A$. In this case $u_A = u_C$.
\item $u_C \in [r(A) + 1, r(B)]$ iff $u$ is an internal node in $B$. In this case $u_B = u_C - r(A) + 1$.
\end{itemize}
\end{proof}

\subsection{Implementation of the procedures} We now show how to implement the queries using local preorder numbers in top-down and bottom-up traversals of $\TD$. 

%There are essentially three different types of queries. Type 1 can be computed by a single top-down search in $\TD$ ($\Access$ and $\Depth$). Type 2 can be computed by a top-down search to identify the local preorder number of relevant node, followed by a bottom-up search to compute the corresponding preorder number in $T$ ($\FirstChild$, $\Parent$, $\LevelAncestor$, and $\NCA$). Type 3 can be computed by a top-down search followed by a bottom-up search that computes the set of clusters representing some relevant subtree and use these to answer the query ($\Decompress$, $\Height$, and $\Size$). 

\subsubsection{Access and Depth}
The queries $\Access(x)$ and $\Depth(x)$ ask for the label and depth of the node whose preorder number in $T$ is $x$. They are both  performed by a single top-down search of $\TD$ starting from its root and ending with the leaf cluster containing $x$. Since the depth of $\TD$ is $O(\log \nt)$ the total time is $O(\log \nt)$.

\paragraph{Access.}
At each cluster $C$ on the top-down search we compute the local preorder number $x_C$. Initially, the root cluster corresponds to the entire $T$ so we set $x_T = x$. Let $C$ be a cluster on the way. If $C$ is a leaf cluster we return the label of the top boundary node if $x_C = 1$ and the label of the single internal node if $x_C = 2$. 
 If on the other hand $C$ is an internal cluster with child clusters $A$ and $B$, we continue the search in the child cluster containing $x_C$. We compute the new local preorder number according to Lemma~\ref{lem:localpreorder}. If  $x_C$ is the shared boundary node between $A$ and $B$ we continue the search in either $A$ or $B$.

\paragraph{Depth}  The only difference between $\Depth(x)$ and  $\Access(x)$ is that during the top-down search we also sum the distances between the top boundary nodes of the visited clusters. Let $d$ be this distance. At the leaf cluster at the end of the search we return $d$ if $x_C = 1$ and $d+1$ if $x_C = 2$. 
Since the distances are stored the total time remains $O(\log \nt)$.

\subsubsection{Firstchild, Level Ancestor, Parent, 
and NCA}
We answer these queries by a top-down search to find the local preorder number in a relevant cluster $C$, and then a bottom-up search to compute the corresponding preorder number in $T$.

\paragraph{Firstchild} We compute $\FirstChild(x)$ in two steps. 
\paragraph{Step 1: Top-down Search.} We do a top-down search to find the first cluster with top boundary node $x$. We use local preorder numbers as in the algorithm for $\Access$. Let $C$ be a cluster in the search. If $x_C = 1$ we stop the search. Otherwise we know that $x_C > 1$. If $C$ is a leaf cluster we stop and report that $x$ does not have a first child since it is a leaf in $T$. 
 If on the other hand $C$ is an internal cluster with child clusters $A$ and $B$, we continue the search in the child cluster containing $x_C$. If $x_C$ is the shared boundary node between $A$ and $B$ we always continue the search in $B$. 
This ensures that we continue to the cluster containing the children of $x$ (recall that $B$ is the deeper cluster in merges of type (a) and (b)). Combined with the condition that we stop the search in the first cluster $C$ where $x$ is the top boundary node (and therefore the last merge before we stop must be of type (a) or (b)), this implies that all children of $x$ are in $C$. 

\paragraph{Step 2: Bottom-up Search.} Let $C$ be the cluster found in Step 1. Since all children of $x$ are in $C$, the node with local preorder number $2$ in $C$ is the first child of $x$. We do a bottom-up search from $C$ to the root cluster to compute the preorder number in $T$ of the node with $x_C=2$.  

\paragraph{Level Ancestor and Parent}
Notice that $\Parent(x)$ can be computed as $\LevelAncestor(x,1)$. Since $\LevelAncestor(x,0) = x$ we focus on   $\LevelAncestor(x,i)$ for $i\ge 1$. This is done in three steps:

\paragraph{Step 1: Compute Depth.} Compute the depth of $\LevelAncestor(x,i)$ as 
$d = \Depth(x) - i$. %If $d < 0$ the level ancestor does not exist in $T$ and we stop the algorithm. 

\paragraph{Step 2: Top-down Search.} We do a top-down search to find the cluster with top boundary node $y$ of depth $d$ such that $x$ is a descendant of $y$ (we will show that such a cluster exists). During the search we maintain the depth of the current top boundary node as in the algorithm for $\Depth$. At each cluster $C$ in the search we also compute a local preorder number $x_C'$ to guide the search. The idea is that $x_C'$ either corresponds to $x$ or to an ancestor of $x$ within $C$. Initially, for the root cluster $T$ we set $x_T' = x$. Let $C$ be an internal cluster in the search with top boundary node $v$ and with children $A$ and $B$. If the depth of $v$ is $d$ we stop the search. Otherwise, we proceed as follows.
\begin{enumerate}
\item If $C$ is of type (a) or (b), $x_C'$ is in $B$, and the shared boundary node of $A$ and $B$ has depth $> d$, we continue the search in $A$ and set $x_A'$ to be the bottom boundary of $A$. 
\item In all other cases, we continue the search in the child cluster containing $x_C'$, and compute the new local preorder number for $x_C'$. 
\end{enumerate}
Note that if the shared boundary node in case 1 has depth $d$ we continue the search in $B$. Combined with the assumption that $i > 0$, it inductively follows that $y$ becomes the top boundary node at some cluster during the top-down search. Hence, at some %a leaf 
cluster in the top-down search the depth of the top boundary node is $d$. 

\paragraph{Step 3: Bottom-up Search.} Let $C$ be the cluster whose top boundary node $v$ has depth $d$ found in Step 2. We do a bottom-up search to compute the preorder number of $v$ in $T$. Finally, we report the result as $y$.

\paragraph{Nearest Common Ancestor.}
We compute $\NCA(x,y)$ in the following steps. We assume w.l.o.g. that $x \neq y$ in the following since $\NCA(x,x)=x$.

\paragraph{Step 1: Top-down Search} We do a top-down search to find the first cluster, whose top boundary node is $\nca(x,y)$ (this cluster always exists since $x \neq y$). At each cluster $C$ in the search we compute local preorder numbers $x_C'$ and $y_C'$. The idea is that  $x_C'$ and $y_C'$ are either $x$ or $y$ or ancestors of $x$ and $y$ and their depth is at least the depth of $\nca(x,y)$. Initially, for the root cluster $T$ we set $x_T' = x$ and $y_T' = y$. Let $C$ be a cluster visited during the search. If $C$ is a leaf cluster we stop the search. Otherwise, $C$ is an internal cluster with children $A$ and $B$. We proceed as follows.
\begin{enumerate}
\item If $x_C'$ and $y_C'$ are in the same child cluster, we continue the search in that cluster, and compute new local preorder numbers for $x_C'$ and $y_C'$.
\item If $C$ is of type (a) or (b) and $x_C'$ and $y_C'$ are in different child clusters we continue the search in $A$. We update the local preorder number of the node in $B$ to be the bottom boundary of $A$.
\item If $C$ is of type (c), (d), or (e) and $x_C'$ and $y_C'$ are in different child clusters we stop the search.
\end{enumerate}

\paragraph{Step 2: Bottom-up Search} Let $C$ be the cluster computed in step 1. We do a bottom-up search to compute the preorder number of the top boundary node of $C$ in the entire tree $T$, and return the result. 

\subsubsection{Decompress, Height, Size, and Next Sibling}
To answer these queries,  the
%The 
key idea is to compute a small set of clusters representing $T(x)$. This set will be a subset of the set $S_x$ defined in Sec.~\ref{sec:comparisonsubtreesharing} and will contain all the relevant information.

We need the following definitions. Let $u$ be a node in $T$. We say that $u$ is on the \emph{spine path} in a cluster $C$ if $u$ is the top boundary node in $C$, or $u$ is on the path from the top boundary node in $C$ to the bottom boundary node in $C$. Since clusters are connected subtrees we immediately have the following.

\begin{lemma}\label{lem:spinepath}
Let $C = A \cup B$ be a cluster with left child $A$ and right child $B$. A node $u$ in $T$ is on the spine path of $C$ iff one of the following cases are true: 
\begin{itemize}
\item $C$ is of type $(c)$ and $u$ is on the spine path in $A$.
\item $C$ is of type $(d)$ and $u$ is on the spine path in $B$.
\item $C$ is of type $(a)$  and $u$ is on the spine path in $A$ or $B$. %(Check lige procedure med $B$????).
\item $u$ is the top boundary node of $C$.
\end{itemize}
\end{lemma}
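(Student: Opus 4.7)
My plan is to prove the lemma by a case analysis on the merge type of $C$, using the key structural fact (implicit in the definition) that within a cluster the spine path is uniquely determined by its boundary nodes: if $C$ has both a top and a bottom boundary node then the spine is the unique path between them (since $C$ is a connected tree pattern of $T$), while if $C$ has only a top boundary node then the spine consists of that single node. I will combine this with the bookkeeping of which boundary nodes $A$ and $B$ contribute to $C$ in each of the five merge configurations shown in Figure~\ref{fig:merge}.

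To set up the case analysis I will first record the boundary structure produced by each merge, reading it off from Figure~\ref{fig:merge}. Type $(a)$ is the vertical merge in which both $A$ and $B$ have bottom boundary nodes; the shared node (bottom of $A$ = top of $B$) becomes internal, and the top of $C$ is the top of $A$, the bottom of $C$ is the bottom of $B$. Type $(b)$ is the vertical merge in which $B$ has no bottom boundary; here $C$ inherits only the top of $A$ as its top boundary and has no bottom boundary. Types $(c)$, $(d)$, $(e)$ are horizontal merges, in which $A$ and $B$ share their top boundary node and $C$'s top boundary equals this common node; $C$'s bottom boundary is inherited from whichever of $A$ or $B$ has one ($A$ for $(c)$, $B$ for $(d)$, none for $(e)$).

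For the $(\Leftarrow)$ direction I will go case by case: if $u$ is the top boundary of $C$ it is on the spine by definition; for a type $(a)$ merge, the path from top of $C$ to bottom of $C$ is the concatenation of the spine of $A$ with the spine of $B$ through the (now internal) shared boundary node; for types $(c)$ and $(d)$ the spine of $C$ coincides with the spine of $A$ and $B$ respectively, since $C$'s bottom boundary is the corresponding subcluster's bottom boundary and the other subcluster contributes no nodes strictly between $C$'s top and bottom. For the $(\Rightarrow)$ direction, assume $u$ is on the spine of $C$. If $u$ is the top boundary of $C$ we are in the fourth case. Otherwise $u$ lies strictly below the top boundary, which forces $C$ to have a bottom boundary and therefore rules out merges of type $(b)$ and $(e)$; for the remaining three types the same boundary bookkeeping shows that $u$ must lie on the spine of the subcluster specified by the lemma.

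I do not expect a real obstacle here: the content of the lemma is essentially just careful bookkeeping of how boundary nodes propagate under each of the five merges, and the main thing to be careful about is making sure the shared boundary node in a type $(a)$ merge is handled correctly (it is internal in $C$ but is on the spine of $C$, and it is simultaneously on the spine of $A$ as its bottom boundary and on the spine of $B$ as its top boundary, so it is covered consistently by the disjunction in the third bullet).
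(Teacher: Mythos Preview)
Your proposal is correct and follows the same approach the paper implicitly relies on. In fact, the paper does not give any proof of this lemma at all: it introduces the definition of the spine path and then says ``since clusters are connected subtrees we immediately have the following,'' so your case analysis on the five merge types simply spells out what the paper treats as obvious bookkeeping.
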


Let $x$ be any internal node in $T$. As in Section~\ref{sec:comparisonsubtreesharing}, let $L$ be the leftmost leaf cluster in $\TD$ such that $x$ is the top boundary node and let $P$ be the path of clusters from the smallest cluster $U$ containing all nodes of $T(x)$ to $L$. We also define $M$ to be the highest cluster on $P$ that has $x$ as the top boundary node, i.e., $M$ is the highest cluster on $P$ that only contains edges from $T(x)$. Recall that $S_x$ is the set of $O(\log n_T)$ off-path  clusters of $P$ that represent $T(x)$. We partition $S_x$ into the set $\widehat{S}_x$ that contains all clusters in $S_x$ that are descendants of $M$ and the set $\widecheck{S}_x$ that contains the remaining clusters. We characterize these sets as follows.

\begin{lemma}\label{lem:reprCluster}
Let $B$ be an off-path  cluster of $P$ with parent $C$ and sibling $A$. Then
\begin{enumerate}
\item $B$ is in $\widehat{S}_x$ iff $B$ is a descendant of $M$.
\item $B$ is in $\widecheck{S}_x$ iff $C$ is a merge of type (a) or (b), $B$ is the right child of $C$, and $x$ is on the spine path of $A$. 
\end{enumerate}
\end{lemma}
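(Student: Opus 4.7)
My plan is to prove both statements by a direct structural analysis of the cluster $C$ and its children, using Lemma~\ref{lem:offpath}, the definition of $M$, and the classification of merge types.

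Statement 1 is essentially immediate from the definitions. The forward direction follows at once because $\widehat{S}_x$ was defined as the subset of $S_x$ consisting of descendants of $M$. For the reverse, suppose $B$ is an off-path cluster that is a descendant of $M$ in $\TT$. Its edges are inherited from $M$'s edges, so $E(B) \subseteq E(M)$. Since $M$ lies on $P$ and has top boundary $x$, every edge of $M$ is in $E(T(x))$, and hence $B \subseteq T(x)$. Thus $B \in S_x$ and, being a descendant of $M$, $B \in \widehat{S}_x$.

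For the forward direction of Statement 2, I start from $B \in \widecheck{S}_x$ and compare the top boundaries of $C$ and $B$. Since $B \in S_x$, the top boundary of $B$ lies in $T(x)$, while the fact that $B$ is not a descendant of $M$ together with $B$ being off-path forces $C$ to lie strictly above $M$ on $P$; by the maximality of $M$ this makes $C$'s top boundary a proper ancestor of $x$. The two top boundaries thus differ, which rules out the horizontal merges of type (c), (d), (e) since those have all three clusters ($A$, $B$, $C$) sharing a top boundary. Hence $C$ is a vertical merge of type (a) or (b); the upper cluster inherits $C$'s top boundary, so $B$ (whose top differs from $C$'s) must be the lower cluster, which by our ordering convention is the right child, with $A$ the upper left child. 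The spine of $A$ runs in $T$ from $A$'s top (a proper ancestor of $x$) down to $A$'s bottom (equal to $B$'s top, lying in $T(x)$), and this path necessarily passes through $x$.

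For the reverse direction of Statement 2, I instantiate the three conditions. Because $B$ is the lower cluster of a vertical merge, $B$'s top boundary equals $A$'s bottom boundary; the spine condition forces $A$'s bottom to be $x$ or a descendant of $x$, hence in $T(x)$, so $B \subseteq T(x)$ and $B \in S_x$. The remaining task is to show $B$ is not a descendant of $M$, i.e., that $C$ lies strictly above $M$ on $P$. My plan is to use the spine condition together with the fact that $A$ is on-path (and thus contains $L = (x, z_1)$) to conclude that $A$'s top boundary is a proper ancestor of $x$; then $C$'s top coincides with $A$'s top, and maximality of $M$ puts $C$ strictly above $M$. Justifying this last step while ruling out the degenerate scenario where $A$'s top coincides with $x$ is the main technical obstacle I anticipate, and it will require a careful appeal to Lemma~\ref{lem:spinepath} and the merge-type conventions.
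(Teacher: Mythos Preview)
Your treatment of Statement~1 and the forward direction of Statement~2 is essentially the paper's argument. The paper rules out merge types (c), (d), (e) by observing that $C$, being a proper ancestor of $M$, has top boundary different from $x$, and then invokes Lemma~\ref{lem:offpath} on $B$; you reach the same conclusion by comparing the top boundaries of $C$ and $B$ directly, which is equivalent. The paper then rules out $B$ being the left child by the same top-boundary reasoning you use.

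The obstacle you flag in the reverse direction is genuine, and your hope of resolving it via Lemma~\ref{lem:spinepath} will not succeed: the degenerate scenario where $A$'s top boundary equals $x$ \emph{can} occur, and in that case $C$'s top boundary is also $x$, so $C$ lies at or below $M$ on $P$ and $B$ is a descendant of $M$ --- hence $B\in\widehat{S}_x$, not $\widecheck{S}_x$. (A concrete instance: let $T$ be a path on four nodes $r,x,y,z$ with root $r$; then $M=U$ is the cluster on $\{x,y,z\}$, and its right child $B=(y,z)$ satisfies all three conditions of Statement~2 yet lies in $\widehat{S}_x$.) Lemma~\ref{lem:spinepath} cannot exclude this case, since $x$ is trivially on the spine of $A$ when it is $A$'s top boundary. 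The paper's own proof glosses over exactly the same point: it concludes with ``$B$ is in $\widecheck{S}_x$ iff $x$ is on the spine path of $A$'' without ever verifying that $B$ is not a descendant of $M$. Statement~2 as written therefore needs the extra hypothesis that $C$ is a proper ancestor of $M$ (equivalently, that $A$'s top boundary is a proper ancestor of $x$). This is harmless for the paper's purposes, since the only use of the lemma is in $\FindRepr$, whose bottom-up search starts at $M$ and hence only examines parents $C$ strictly above~$M$.
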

\begin{proof}
For the first property, first note that if $B$ is in $\widehat{S}_x$ it is by definition a descendant of $M$. Conversely, if $B$ is a descendant of $M$, we have that $E(B) \subseteq E(M) \subseteq E(T(x))$. By definition of $\widehat{S}_x$, we have that $B$ is in $\widehat{S}_x$. 

Next consider property 2. Suppose that $B$ is in $\widecheck{S}_x$. Then, by Lemma~\ref{lem:offpath} and the definition of $S_x$ we have that $E(B) \subseteq E(T(x))$. Furthermore, since $C$ is a proper ancestor of $M$, $C$ contains edges from both $T(x)$ and $T\setminus T(x)$, and therefore $A$ must also contain edges from both $T$ and $T\setminus T(x)$. 

Assume for contradiction that $C$ is of type (c), (d), or (e). Then, the top boundary node $v$ of $C$ is also the top boundary node in $A$ and $B$. Since $x \neq v$ by definition of $M$, we have by Lemma~\ref{lem:offpath} that $E(B) \cap E(T(x)) = \emptyset$ and thus $B$ cannot be in $\widecheck{S}_x$. 

Hence, assume that $C$ is of type (a) or (b). Assume for contradiction that $B$ is the left child of $C$. Since all clusters on $P$ contain $E(L)$ and $C$ contains edges from both $T(x)$ and $T\setminus T(x)$, we have that the top boundary node of $B$ is a proper ancestor of $x$. Hence, $B$ cannot be in $\widecheck{S}_x$. 

Finally, if $B$ is of type (a) or (b) and is the right child of $C$, then $E(B) \subseteq E(T(x))$ iff the top boundary node $v$ of $B$ is a descendant of $x$. But $v$ is a descendant of $x$ iff $x$ is on the spine path of $A$. Hence, $B$ is in $\widecheck{S}_x$ iff $x$ is on the spine path of $A$.
\end{proof}
\medskip

\noindent In the following we show how to efficiently compute $\widecheck{S}_x$ using the procedure $\FindRepr$. We then use $\FindRepr$ to implement the remaining  procedures.

\paragraph{FindRepresentatives}
%The 
Procedure $\FindRepr(x)$ computes the set $\widecheck{S}_x$ and cluster $M$ in two steps. 

\paragraph{Step 1: Top-down Search} We do a top-down search to find the cluster $M$, i.e., the highest cluster on $P$ that has $x$ as the top boundary node. If no such node exists, then $x$ is a leaf node in $T$. 

\paragraph{Step 2: Bottom-up Search} We do a bottom-up search from $M$ and add clusters according to Lemma~\ref{lem:reprCluster} as follows. Initially, set $S = \emptyset$. Let $A$ be a cluster on the path with sibling $B$ and parent $C$.
%$\FindRepr(x)$ finds the set of clusters, $\widecheck{S}_x$. Initially $S$ is empty. We first traverse as in $\Access(x)$, but stop at the first cluster $C$ for which $x_C= 1$  or where $C$ represents the edge $(u,v)$. If $C$ is a leaf cluster representing the edge $(u,v)$ and $x_C\neq 1$ then $T(x)=v$ and we are done. Otherwise we add $C$ to $S$ and then do a reverse traversal from $C$ towards the root cluster as follows. In each cluster $A$ on the path do the following: Let $C$ and $B$ denote the parent and sibling of $A$, respectively.
\begin{enumerate}
\item If $C$ is of type $(a)$ or $(b)$ and $A$ is the left child of $C$, add $B$ to $S$.
\item  If one of the following conditions are true, stop  the traversal:
\begin{itemize}
\item $C$ is of type $(c)$ and $A$ is the right child of $C$.
\item $C$ is of type $(d)$ and $A$ is the left child of $C$.
\item $C$ is of type $(e)$ or $(b)$.
\end{itemize}
\end{enumerate}
Note that, as long as we continue the bottom-up search and consider clusters on the path, we have that $x$ is on the spine path of these clusters. This is because we continue the bottom-up search  according to the cases of Lemma~\ref{lem:spinepath}. It follows from  Lemma~\ref{lem:reprCluster} that the clusters we add to $S$ are exactly the clusters in the set representing $T(x)$. The total time is $O(\log n_T)$.

\paragraph{Decompress} 
To compute $\decomp(x)$, we use $\FindRepr(x)$ to compute the sets of cluster $\widecheck{S}_x$ and $M$. We construct $T(x)$ from $\widecheck{S}_x$ (and $M$) and the path $P$ computed during the traversal of $\TD$. First, we decompress all clusters in $\widecheck{S}_x$ (and $M$) by unfolding their subDAG and constructing their corresponding subtree of $T$. We then combine these subtrees using the merge information stored for each cluster in $P$.   

In total we use $O(\log n_T)$ time for $\FindRepr(x)$ and computing the path $P$. The total time to decompress a cluster $\TD$ by unfolding is linear in its size. Hence, the total time used is $O(\log n_T + |T(x)|)$.  

\paragraph{Height} First we compute the set of clusters $\widecheck{S}_x$ and cluster $M$ using $\FindRepr(x)$. Define the \emph{local height} of a cluster to be the length of the path from the top boundary node to the bottom boundary node if it is an internal cluster, and the height of the cluster if it is a leaf cluster.  We compute the height of $T(x)$ as the sum of the local heights of all clusters in $\widecheck{S}_x$ plus the height of $M$. This correctly computes the height since all clusters in $\widecheck{S}_x$ are merged with their siblings by type (a) or (b). Since the  height and the distance from top boundary node to bottom boundary node for each cluster in $\TD$ is stored we use $O(\log n_T)$ time in total. 

\paragraph{Size} Similar to height. We sum the sizes of clusters in $\widecheck{S}_x$ and $M$ and subtract $|\widecheck{S}_x|$ (to exclude shared boundary nodes). This also uses $O(\log n_T)$ time.

\paragraph{Nextsibling} We compute $\NextSibling(x)$ directly from $\Size(x)$ since $\NextSibling(x) = x + \Size(x)$.

\section{Conclusion and Open Problems}
We have presented the new top tree compression scheme, and shown that it achieves close to optimal worst-case compression, can compress exponentially better than DAG compression, is never much worse than DAG compression, and supports navigational queries in logarithmic time. We conclude with some open problems.
\begin{itemize}
\item Surprisingly, top tree compression is the first compression scheme for trees that achieves any provable non-trivial compression guarantee compared to the classical DAG compression. We wonder how other tree compression schemes compare to DAG compression and if it is possible to construct a tree compression scheme that  exploits tree pattern repeats and always compresses better than a logarithmic factor of the DAG compression. 
\item Pattern matching in compressed \emph{strings} is a well-studied and well-developed area with numerous results, see e.g., the surveys \cite{Gasieniec1996,Rytter2004,Lohrey2012}. Pattern matching in compressed trees (especially within tree compression schemes that exploit tree pattern repeats) is a wide open area. 
\item We wonder if top tree compression is practical. In preliminary experiments we have compared our top DAG compression with standard DAG compression on typical XML datasets that were previously used in papers on DAG compression. The experiments match our theoretical expectations, i.e., that most trees compress better with top tree compression, and only balanced trees compress slightly better with standard DAG compression.
\end{itemize}

\section{Acknowledgments}
We would like to thank the anonymous reviewer for the important and helpful comments.

\bibliographystyle{abbrv}
\bibliography{paper}

%\newpage
%
%\section*{Appendix: Experimental Results}
%We compared our top DAG compression with standard DAG compression on typical XML datasets that were previously used in papers on DAG compression. The experiments match our theoretical expectations, i.e., that balanced trees compress slightly better with standard DAG than with top tree compression, but either shallow trees or deep trees compress better with top tree compression. 
%
%
%\begin{figure}[h!]
%   \begin{center}
%   \includegraphics[scale=0.53]{TDCompressionAnalysis}
%   \caption{Top tree compression vs. standard DAG compression. A roughly-balanced tree (line 3 in the table above) compresses better with DAG compression. All other trees (that are in this case shallow) compresses better with top tree compression (up to four times more than with DAG compression). }
%      \end{center}
%\end{figure}
%

\end{document}